\documentclass[11pt]{article}

\setlength{\headheight}{0mm}
\setlength{\oddsidemargin}{-0mm}
\setlength{\topmargin}{-2mm}        
\setlength{\textwidth}{160mm}
\setlength{\textheight}{230mm}       

\usepackage{theorem}
\usepackage{graphicx}
\usepackage{bm}
\usepackage{amsmath}
\usepackage{amssymb}
\usepackage{algorithm}
\usepackage{algorithmic}
\usepackage{tabularx}

\theoremstyle{plain}
\newtheorem{theorem}{Theorem}[section]
\newtheorem{lemma}[theorem]{Lemma}%[section]
\newtheorem{corollary}[theorem]{Corollary}%[section]
%[section]
\newtheorem{proposition}[theorem]{Proposition}%[section]
%[section]
%[section]

\newsavebox{\ProofSym}
\savebox{\ProofSym}{%
  \begin{picture}(7,7)
    \put(0,0){\framebox(6,6){}}
    \put(0,2){\framebox(4,4){}}
  \end{picture}}
\newcommand{\eop}{\hfill \usebox{\ProofSym}} 
\newenvironment{proof}{\noindent {\it Proof.}}{\eop\par\vspace{0.3cm}}
\newenvironment{proof2}[1]{\noindent {\it Proof of #1.}}{\eop\par\vspace{0.3cm}}

\title{Covering Directed Graphs by In-trees}
\author{Naoyuki {\sc Kamiyama}\thanks{
Department of Architecture and Architectural Engineering,
Kyoto University, 
Kyotodaigaku-Katsura, Nishikyo-ku,
Kyoto, 615-8540, Japan.
E-mail$\colon$ {\ttfamily is.kamiyama@archi.kyoto-u.ac.jp} 
Supported by JSPS Research Fellowships for Young Scientists.}
\and 
Naoki {\sc Katoh}\thanks{Supported by the project {\it New Horizons in
Computing}, Grant-in-Aid for Scientific Research 
on Priority Areas, MEXT Japan.
Department of Architecture and Architectural Engineering,
Kyoto University, 
Kyotodaigaku-Katsura, Nishikyo-ku,
Kyoto, 615-8540, Japan.
E-mail$\colon$ {\ttfamily naoki@archi.kyoto-u.ac.jp} 
}
}
\date{\today}

\begin{document}

\maketitle

\begin{abstract}
\noindent
Given a directed graph $D=(V,A)$ with a set of $d$ specified vertices
$S=\{s_1,\ldots,s_d\}\subseteq V$ and a function $f\colon
S \to \mathbb{Z}_+$ where $\mathbb{Z}_+$ denotes the set of non-negative integers, 
we consider the problem which asks whether there exist $\sum_{i=1}^d f(s_i)$ in-trees denoted by 
$T_{i,1},T_{i,2},\ldots, T_{i,f(s_i)}$ for every $i=1,\ldots,d$ such that
$T_{i,1},\ldots,T_{i,f(s_i)}$ are rooted at $s_i$, each $T_{i,j}$
spans vertices from which $s_i$ is reachable and the union of all arc sets of
$T_{i,j}$ for $i=1,\ldots,d$ and $j=1,\ldots,f(s_i)$ covers $A$. In this paper, we prove that such set of 
in-trees covering $A$ can be found by using an algorithm for the weighted matroid intersection problem
in time bounded by a polynomial in $\sum_{i=1}^df(s_i)$ and the size of $D$. 
Furthermore, for the case where $D$ is acyclic, 
we present another characterization of the existence of in-trees covering $A$,  
and then we prove that in-trees covering $A$ can be 
computed more efficiently than the general case by finding maximum matchings in a series of bipartite graphs.
\end{abstract}

\section{Introduction}

The problem for covering a graph by subgraphs with specified properties (for example, trees or paths) is
very important from practical and theoretical viewpoints and have
been extensively studied. For example, Nagamochi and Okada~\cite{NO07}
studied the problem for covering a set of vertices of a given undirected
tree by subtrees, and Arkin et al.~\cite{AHL06} studied the problem  
for covering a set of vertices or edges of a given undirected graph by
subtrees or paths. These results were motivated by vehicle routing problems. 
Moreover, Even et al.~\cite{EGKRS04} studied the covering problem
motivated by nurse station location problems. 

This paper studies the problem for covering a directed graph by rooted trees which is 
motivated by the following evacuation planning problem. 
Given a directed graph which models a city, 
vertices model intersections and buildings,
and arcs model roads connecting these intersections and buildings.
People exist not only at vertices but also along arcs. 
Suppose we have to give several evacuation instructions for
evacuating all people to some safety place. In order to avoid disorderly
confusion, it is desirable that one evacuation instruction gives a single evacuation path
for each person and these paths do not cross each other. 
Thus, we want each evacuation instruction to become an
in-tree rooted at some safety place. Moreover, the number of instructions
for each safety place is bounded in proportion to a size of each safety place. 

The above evacuation planning problem is formulated as the following
covering problem defined on a directed graph. 
We are given a directed graph $D=(V,A,S,f)$ 
which consists of a vertex set $V$, an arc set $A$, a set of $d$ specified vertices
$S=\{s_1,\ldots,s_d\}\subseteq V$ and a function $f\colon S\to
\mathbb{Z}_+$ where $\mathbb{Z}_+$ denotes the set of non-negative integers. 
In the above evacuation planning problem, $S$ corresponds to a set of safety
places, and $f(s_i)$ represents the upper bound of
the number of evacuation instructions for $s_i \in S$. 
For each $i=1,\ldots,d$, we define $V^i_D \subseteq V$ as the
set of vertices in $V$ from which $s_i$ is reachable in $D$, and we define an in-tree
rooted at $s_i$ which spans $V^i_D$ as a {\it $(D,s_i)$-in-tree}.
We define a set $\mathcal{T}$ of $\sum_{i=1}^df(s_i)$ subgraphs of $D$ 
as a {\it $D$-canonical set of in-trees}
if $\mathcal{T}$ contains exactly $f(s_i)$ $(D,s_i)$-in-trees for every $i=1,\ldots,d$. 
If every two distinct in-trees of a $D$-canonical set $\mathcal{T}$ 
of in-trees are arc-disjoint, we call $\mathcal{T}$ a {\it 
$D$-canonical set of arc-disjoint in-trees}. 
Furthermore, if the union of arc sets of all
in-trees of a $D$-canonical set $\mathcal{T}$ of in-trees 
is equal to $A$, we say that $\mathcal{T}$ {\it covers} $A$. 

Four in-trees illustrated in Figure~\ref{fig:covering2}
compose a $D$-canonical set $\mathcal{T}$ of in-trees which covers the arc set of a
directed graph  $D=(V,A,S,f)$ illustrated in Figure~\ref{fig:covering1}(a) where 
$S=\{s_1,s_2,s_3\}$, $f(s_1)=2$, $f(s_2)=1$ and $f(s_3)=1$. However, $\mathcal{T}$ 
is not a $D$-canonical set of arc-disjoint in-trees. 
\begin{figure}[h]
\begin{minipage}{0.5\hsize}
\begin{center}
\includegraphics[width=5cm]{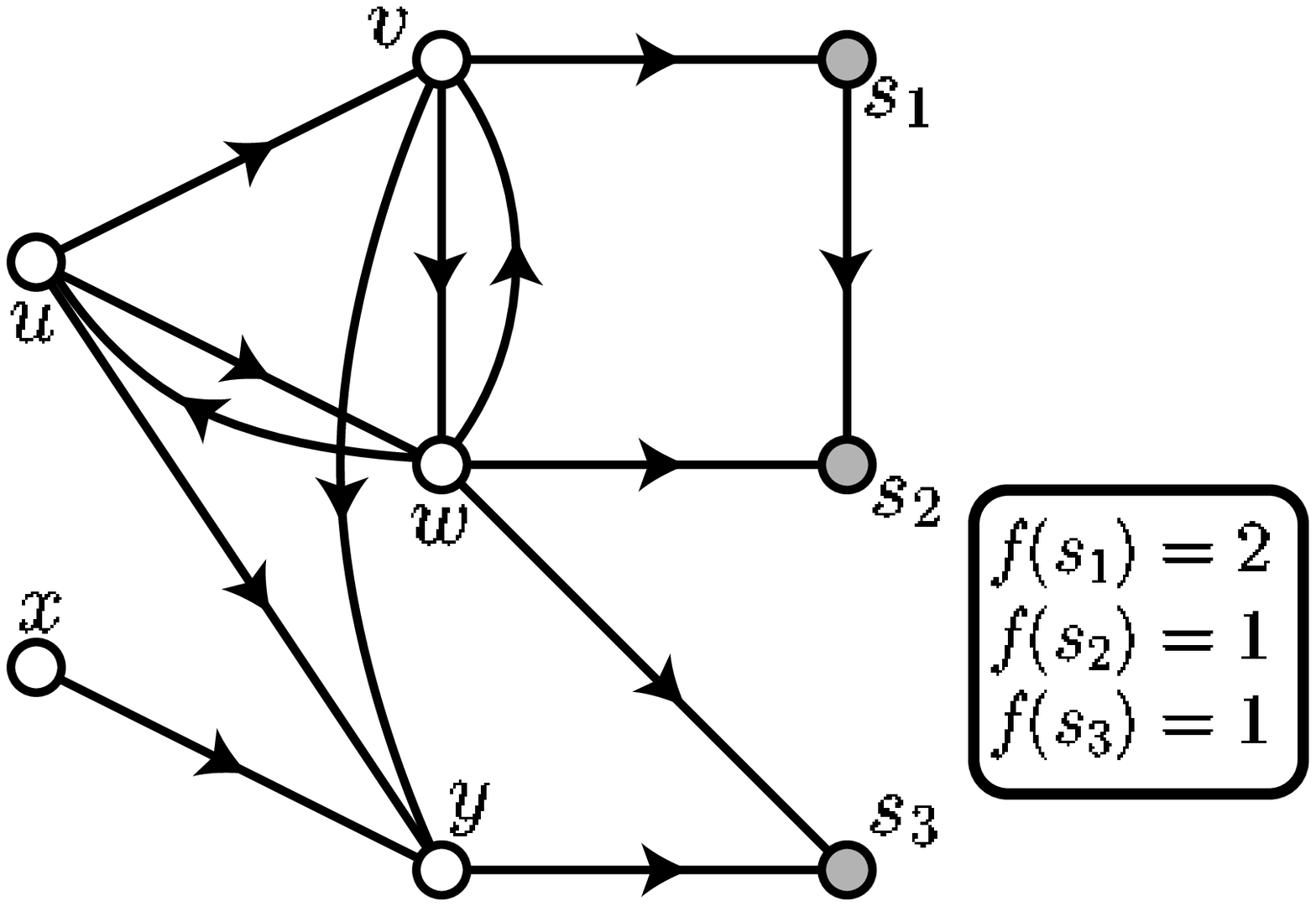}
\par(a)
\end{center}
\end{minipage}
\begin{minipage}{0.5\hsize}
\begin{center}
\includegraphics[width=5cm]{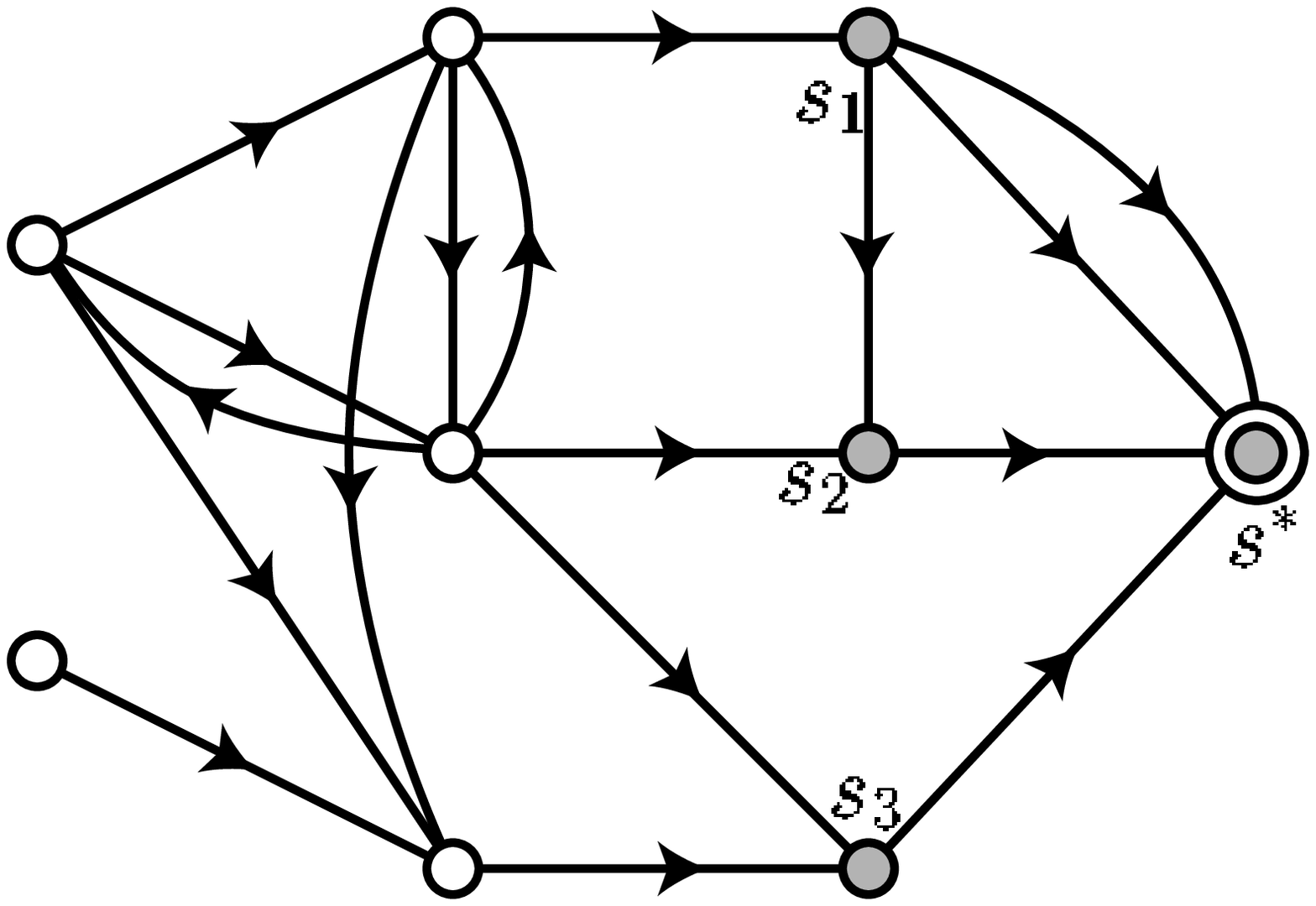}
\par(b)
\end{center}
\end{minipage}
\caption{\small (a) Directed graph  $D$. (b) Transformed graph $D^{\ast}$.}
\label{fig:covering1}
\end{figure}
\begin{figure}[h]
\begin{minipage}{0.245\hsize}
\begin{center}
\includegraphics[width=3.5cm]{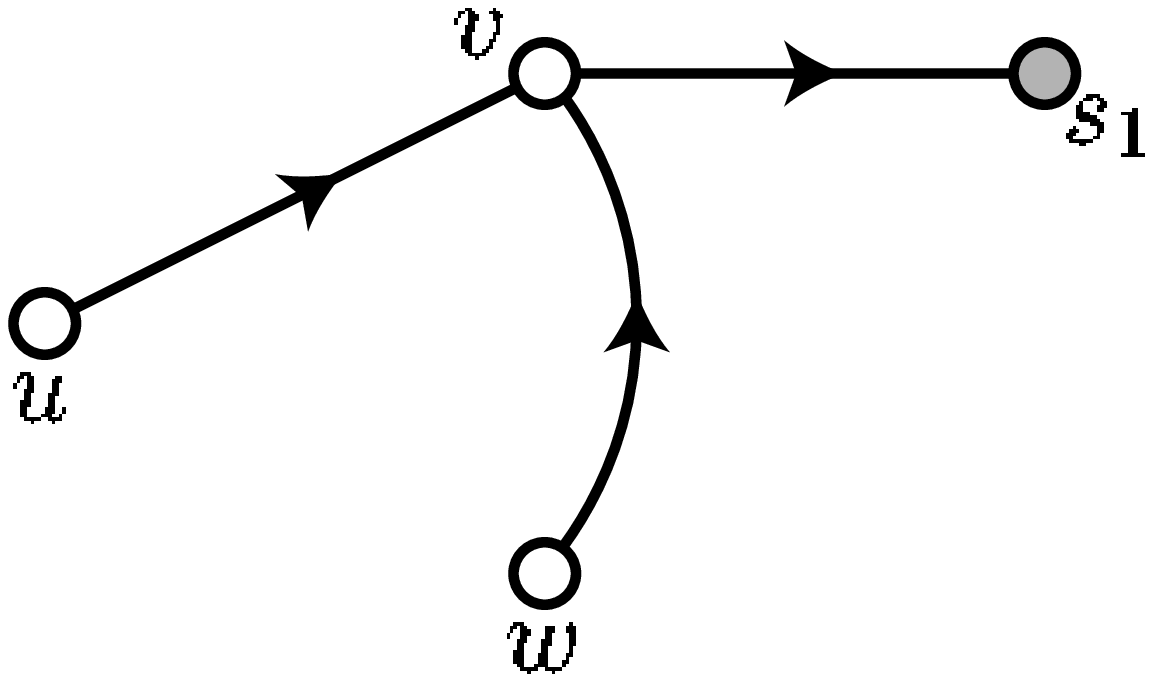}
\par(a)
\end{center}
\end{minipage}
\begin{minipage}{0.245\hsize}
\begin{center}
\includegraphics[width=3.5cm]{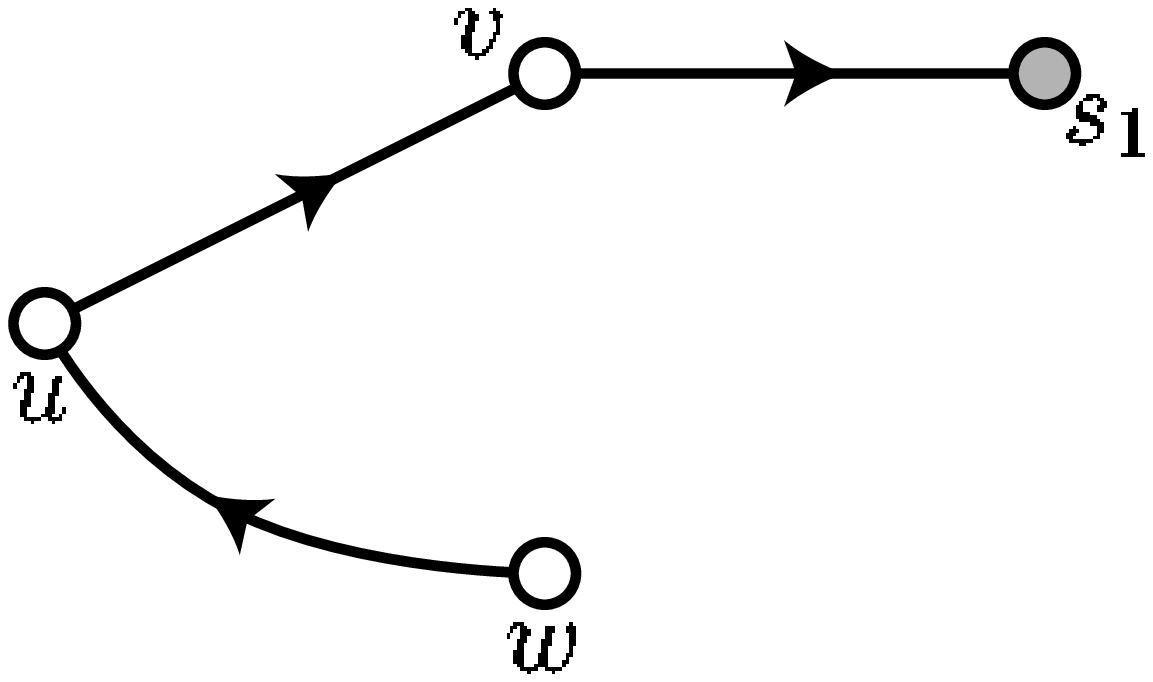}
\par(b)
\end{center}
\end{minipage}
\begin{minipage}{0.245\hsize}
\begin{center}
\includegraphics[width=3.5cm]{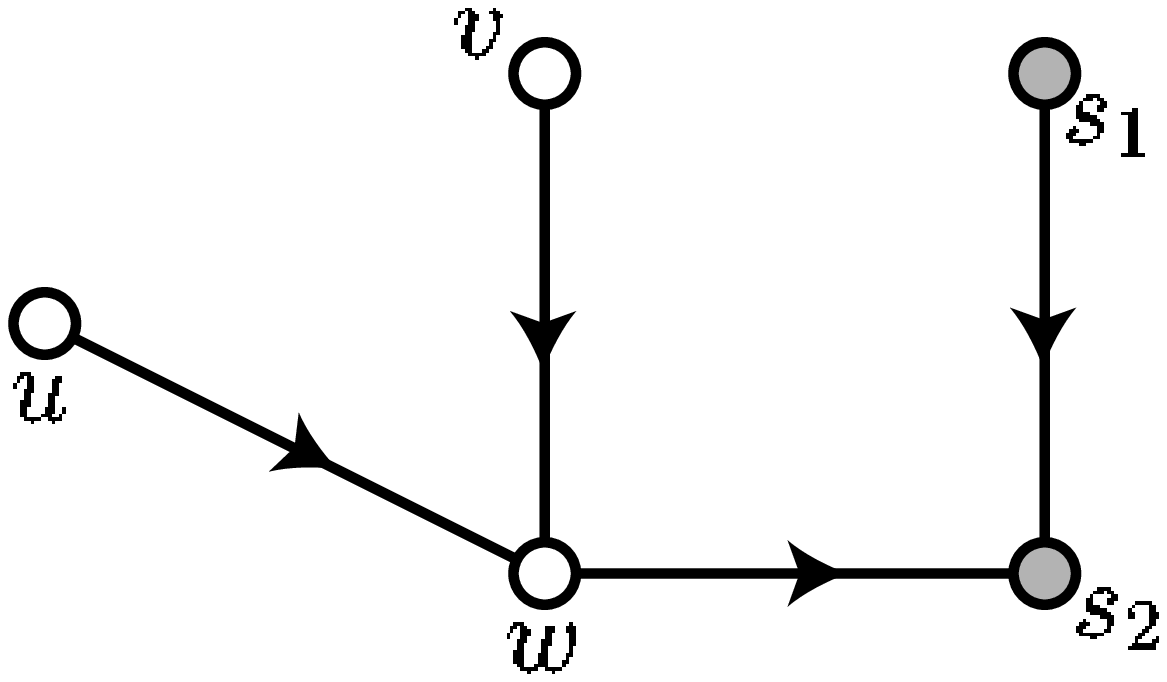}
\par(c)
\end{center}
\end{minipage}
\begin{minipage}{0.245\hsize}
\begin{center}
\includegraphics[width=3.5cm]{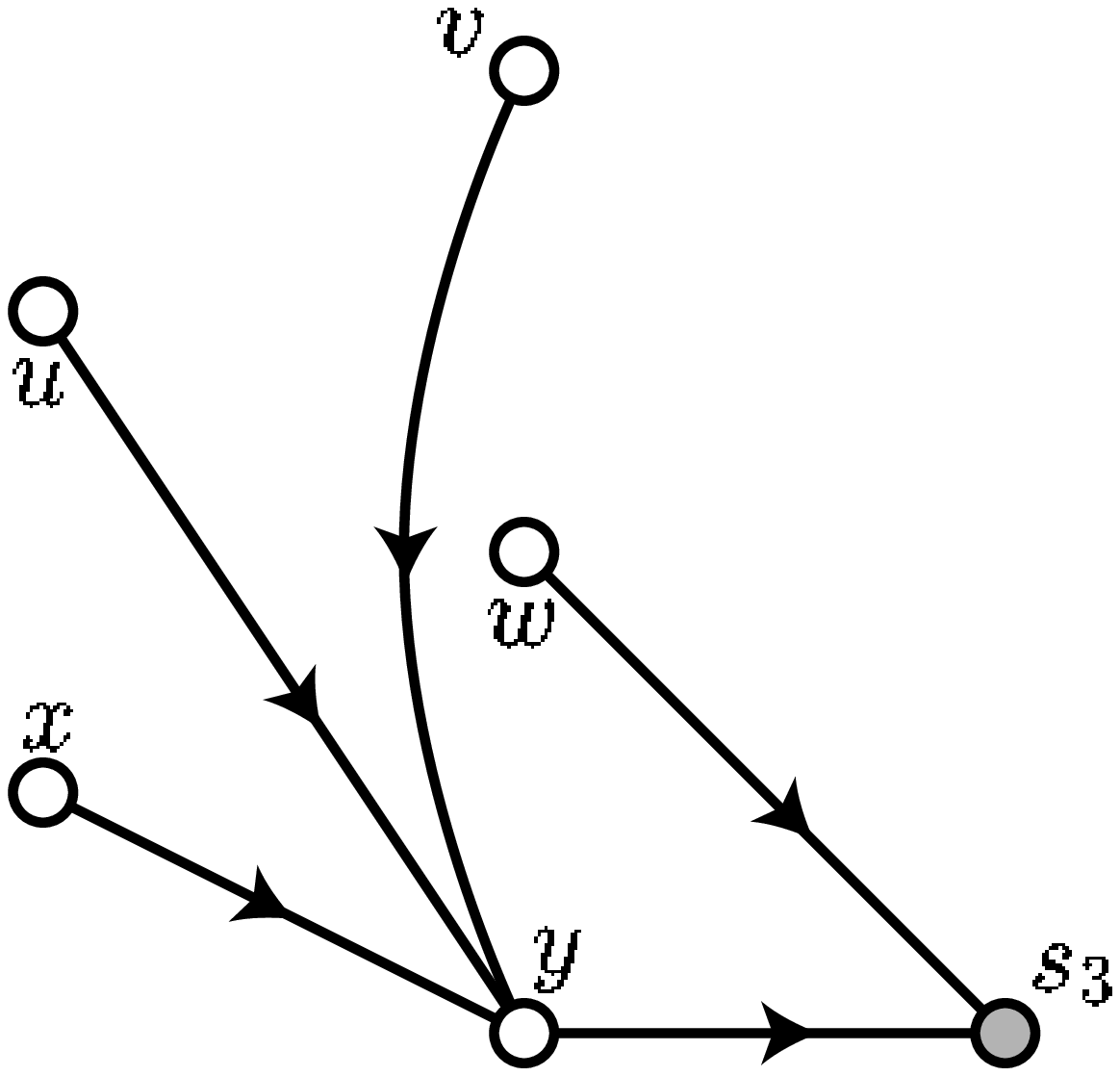}
\par(d)
\end{center}
\end{minipage}
\caption{\small (a) $(D,s_1)$-in-tree. (b) $(D,s_1)$-in-tree. (c) $(D,s_2)$-in-tree. (d) $(D,s_3)$-in-tree.}
\label{fig:covering2}
\end{figure}

We will study the problem for 
{\it covering directed graphs by in-trees} (in short $\mbox{CDGI}$), 
and we will present characterizations for a directed graph $D=(V,A,S,f)$ 
for which there exists a feasible solution of $\mbox{CDGI}(D)$, and 
a polynomial time algorithm for $\mbox{CDGI}(D)$. 
\begin{center}
\begin{tabularx}{150mm}{rX}
\hline
{\bf Problem$\colon$} & $\mbox{CDGI}(D)$ 
\\
\hline 
{\bf Input$\colon$} & a directed graph $D$; \\
{\bf Output$\colon$} &
a $D$-canonical set of in-trees which covers the arc set of $D$, if one exists.\\
\hline
\end{tabularx}
\end{center}
A special class of the problem $\mbox{CDGI}(D)$ in which $S$ consists of a single vertex 
was considered by Vidyasankar~\cite{V78}. 
He showed the necessary and sufficient condition 
in terms of linear inequalities 
that there exists a feasible solution of this problem
(a weaker version was shown by Frank~\cite{F79}).
However, to the best of our knowledge, 
an algorithm for $\mbox{CDGI}(D)$ was not 
presented.

We will summerize our results as follows. 
\begin{enumerate}
\item 
We first show that $\mbox{CDGI}(D)$ can be viewed as some type of 
the connectivity augmentation problem. After this, 
we will prove that 
this connectivity augmentation problem can be solved by using 
an algorithm for the weighted matroid intersection problem in time bounded by a polynomial in 
$\sum_{i=1}^d f(s_i)$ and the size of $D$ 
(this generalizes the result by Frank~\cite{F06}). 
\item
For the case where $D$ is acyclic, we show another characterization for $D$ 
that there exists a feasible solution of $\mbox{CDGI}(D)$. Moreover, we prove that 
in this case $\mbox{CDGI}(D)$ can be solved more efficiently than the general case by 
finding maximum matchings in a series of bipartite graphs instead of using 
an algorithm for the weighted matroid intersection problem. \\
\end{enumerate}

\subsection{Outline}

The rest of this paper is organized as follows. 
Section~\ref{Preliminaries} gives necessary definitions and fundamental results. 
In Section~\ref{Algorithm}, we give an algorithm for the problem 
$\mbox{CDGI}$ by using an algorithm for the weighted matroid intersection 
problem. In Section~\ref{Acyclic Case}, we consider the acyclic case. 

\section{Preliminaries}
\label{Preliminaries}

Let $D=(V,A,S,f)$ be a connected directed graph which may have multiple arcs. 
Let $S=\{s_1,\ldots,s_d\}$. 
Since we can always cover by
$|A|$ $(D,s_i)$-in-trees the arc set of the subgraph
of $D$ induced by $V^i_D$, 
we consider the problem by using at most
$|A|$ $(D,s_i)$-in-trees. That is,
without loss of generality, we assume that $f(s_i) \le |A|$.
For $B \subseteq A$, let $\partial^-(B)$ (resp.~$\partial^+(B)$) be a set of tails (resp.~heads) of arcs in $B$. 
For $e \in A$, we write $\partial^-(e)$ and $\partial^+(e)$ instead of $\partial^-(\{e\})$ and $\partial^+(\{e\})$, 
respectively. 
For $W \subseteq V$,
we define $\delta_D(W)=\{e\in A \colon \partial^-(e) \in W, \partial^+(e) \notin W\}$.
For $v \in V$, we write $\delta_D(v)$ instead of $\delta_D(\{v\})$. 
For two distinct vertices $u, v \in D$, we denote by $\lambda(u,v;D)$ the 
local arc connectivity from $u$ to $v$ in $D$, i.e.,
$\lambda(u,v;D)
= \min\{|\delta_D(W)|\colon u \in W, v \notin W, W \subseteq V\}$. 
We call a subgraph $T$ of $D$ {\it forest} if 
$T$ has no cycle when we ignore the direction of arcs in $T$. 
If a forest $T$ is connected, we call $T$ {\it tree}.  
If every arc of an arc set $B$ is parallel to some arc in $A$, 
we say that $B$ is {\it parallel} to $A$. 
We denote a directed graph obtained by adding an arc set $B$ to $A$ by $D+B$, 
i.e., $D+B=(V,A\cup B,S,f)$. 
For $S'\subseteq S$, let $f(S')=\sum_{s_i \in
S'}f(s_i)$. 
For $v \in V$, we denote by $R_D(v)$ a set of vertices in
$S$ which are reachable from $v$ in $D$. 
For $W \subseteq V$,
let $R_D(W)=\bigcup_{v \in W}R_D(v)$.

For an arc set $B$ which is parallel to $A$, we clearly have
for every $v \in V$ 
\begin{equation} \label{eq3:directed graphs}
R_D(v)=R_{D+B}(v). 
\end{equation}
From (\ref{eq3:directed graphs}), we have
for every $i=1,\ldots,d$
\begin{equation} \label{eq4:directed graphs}
V^i_D=V^i_{D+B}.
\end{equation}

We define $D^{\ast}$ as a directed graph  obtained from $D$ by
adding a new vertex $s^{\ast}$ and connecting $s_i$ to $s^{\ast}$ with
$f(s_i)$ parallel arcs for every $i=1,\ldots,d$ (see Figure~\ref{fig:covering1}).
We denote by $A^{\ast}$ the arc set of $D^{\ast}$.  
From the definition of $D^{\ast}$, 
\begin{equation} \label{eq1:directed graphs}
|A^{\ast}|
=\mbox{$\sum$}_{v\in V}|\delta_{D^{\ast}}(v)|
= |A| + f(S).
\end{equation}
We say that $D$ is {\it $(S,f)$-proper} when 
$|\delta_{D^{\ast}}(v)| \le f(R_D(v))$ holds for every $v \in V$. 

\subsection{Rooted arc-connectivity augmentation by reinforcing arcs}

Given a directed graph  $D=(V,A,S,f)$, 
we call an arc set $B$ with $A\cap B=\emptyset$ which is parallel to $A$ 
a {\it $D^{\ast}$-rooted connector} 
if $\lambda(v,s^{\ast};D^{\ast}+B) \ge f(R_D(v))$ holds for every $v \in V$.
Notice that since a $D^{\ast}$-rooted connector $B$ is parallel to $A$, 
$B$ does not contain an arc which is parallel to an arc entering into $s^{\ast}$ in $D^{\ast}$. 
Then, the problem {\it rooted arc-connectivity augmentation by reinforcing arcs} 
(in short $\mbox{RAA-RA}$) is formally defined as follows.
\begin{center}
\begin{tabularx}{150mm}{rX}
\hline
{\bf Problem$\colon$} & $\mbox{RAA-RA}(D^{\ast})$ \\
\hline 
{\bf Input$\colon$} & $D^{\ast}$ of a directed graph $D$;\\
{\bf Output$\colon$} &
a $D^{\ast}$-rooted connector $B$ whose size is minimum among all 
$D^{\ast}$-rooted connectors. \\
\hline
\end{tabularx}
\end{center}

Notice that the problem $\mbox{RAA-RA}(D^{\ast})$ 
is not equivalent to the local arc-connectivity augmentation problem with 
minimum number of reinforcing arcs from $v \in V$ to $s_i \in R_D(v)$.
For example, we consider $D^{\ast}$ illustrated in Figure~\ref{example_raa-ra}(a) 
of a directed graph  $D=(V,A,S,f)$ 
where $S=\{s_1,s_2\}$, $f(s_1)=2$ and $f(s_2)=2$. The broken lines in Figure~\ref{example_raa-ra}(b)
represent a minimum $D^{\ast}$-rooted connector. For the problem that asks to increase the $v$-$s_i$ local 
arc-connectivity for every $v \in V$ and $s_i \in R_D(v)$ to $f(s_i)$ by adding minimum 
parallel arcs to $A$ (this problem is called the problem 
{\it increasing arc-connectivity by reinforcing arcs} in \cite{J97}, 
in short $\mbox{IARA}(D^{\ast})$), 
an optimal solution is a set of broken lines in Figure~\ref{example_raa-ra}(c). 
While it is known~\cite{J97} that $\mbox{IARA}(D^{\ast})$ 
is $\mathcal{NP}$-hard, 
it is known~\cite{F06} that 
$\mbox{RAA-RA}(D^{\ast})$ in which $S$ consists of a single element can be solved 
in time bounded by a polynomial in $f(S)$ and the size of $D$
by using an algorithm for the weighted matroid intersection.    

\begin{figure}[h]
\begin{minipage}{0.33\hsize}
\begin{center}
\includegraphics[width=3cm]{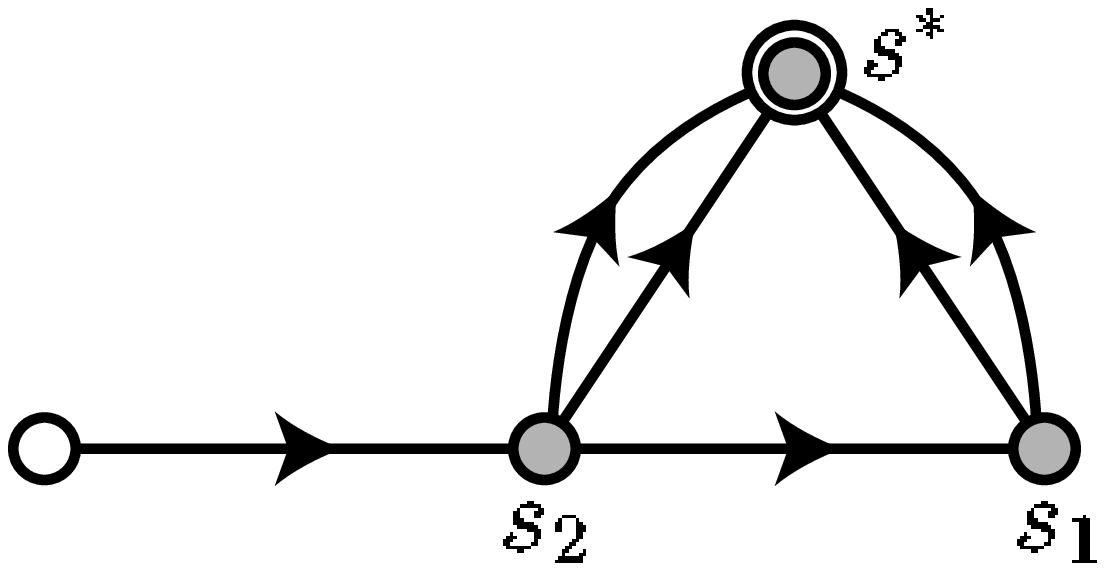}
\par(a)
\end{center}
\end{minipage}
\begin{minipage}{0.32\hsize}
\begin{center}
\includegraphics[width=3cm]{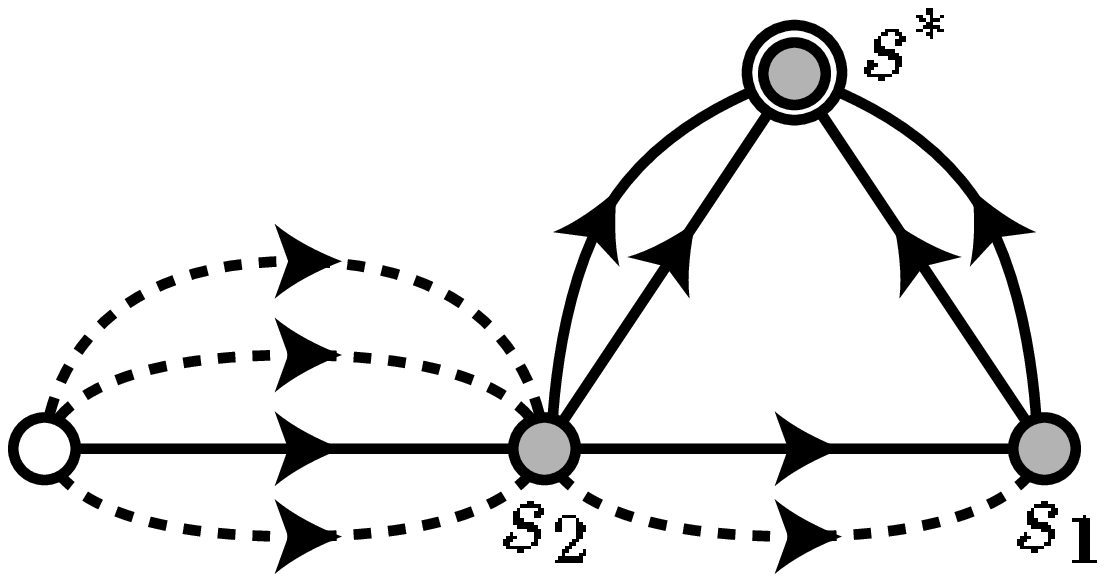}
\par(b)
\end{center}
\end{minipage}
\begin{minipage}{0.33\hsize}
\begin{center}
\includegraphics[width=3cm]{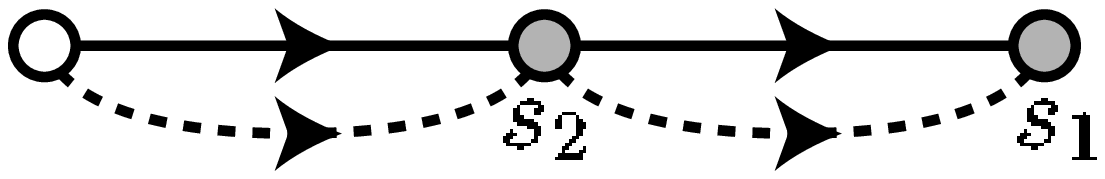}
\par(c)
\end{center}
\end{minipage}
\caption{\small (a) Input.
(b) Optimal solution for $\mbox{RAA-RA}$. (c) Optimal solution for $\mbox{IARA}$.}
\label{example_raa-ra}
\end{figure}

\subsection{Matroids on arc sets of directed graphs}

In this subsection, 
we define two matroids $\bm{M}(D^{\ast})$ and $\bm{U}(D^{\ast})$ 
on $A^{\ast}$ for a directed graph $D=(V,A,S,f)$, 
which will be used in the subsequent discussion.
We denote by $\bm{M}=(E,\mathcal{I})$ a matroid on $E$ whose collection of independent sets is 
$\mathcal{I}$. Introductory treatment of a matroid is given in \cite{O92}. 

For $i=1,\ldots,d$ and $j=1,\ldots,f(s_i)$, 
we define $\bm{M}_{i,j}(D^{\ast})=(A^{\ast},\mathcal{I}_{i,j}(D^{\ast}))$ 
where $I \subseteq A^{\ast}$ belongs to $\mathcal{I}_{i,j}(D^{\ast})$ 
if and only if both of 
a tail and a head of every arc in $I$ are contained in $V^i_D \cup \{s^{\ast}\}$ and 
a directed graph $(V^i_D\cup \{s^{\ast}\}, I)$ is a forest. 
$\bm{M}_{i,j}(D^{\ast})$ is clearly a matroid (i.e.~graphic matroid). 
Moreover, we denote the union of $\bm{M}_{i,j}(D^{\ast})$ 
for $i=1,\ldots,d$ and $j=1,\ldots,f(s_i)$
by $\bm{M}(D^{\ast})=(A^{\ast},\mathcal{I}(D^{\ast}))$ 
in which $I \subseteq A^{\ast}$ belongs to $\mathcal{I}(D^{\ast})$
if and only if $I$ can be partitioned into $\{I_{i,1},\ldots,I_{i,f(s_i)}\colon i=1,\ldots,d\}$ 
such that each $I_{i,j}$ belongs to $\mathcal{I}_{i,j}(D^{\ast})$. 
$\bm{M}(D^{\ast})$ is also a matroid
(see Chapter~12.3 in \cite{O92}. This matroid is also called {\it matroid sum}).
When $I \in \mathcal{I}(D^{\ast})$ can be partitioned into $\{I_{i,1},\ldots,I_{i,f(s_i)}\colon i=1,\ldots,d\}$
such that a directed graph $(V^i_D\cup \{s^{\ast}\},I_{i,j})$ is a tree for 
every $i=1,\ldots,d$ and $j=1,\ldots,f(s_i)$, 
we call $I$ a {\it base of $\bm{M}(D^{\ast})$}. 

Next we define another matroid. We define $\bm{U}(D^{\ast})=(A^{\ast},\mathcal{J}(D^{\ast}))$ 
where $I\subseteq A^{\ast}$ 
belongs to $\mathcal{J}(D^{\ast})$ if and only if $I$ satisfies 
\begin{equation} \label{eq1:matroids}
|\delta_{D^{\ast}}(v)\cap I| \le 
\left\{
\begin{array}{ll}
f(R_D(v)), & \mbox{ if } v \in V, \\
0, & \mbox{ if } v=s^{\ast}.  
\end{array}
\right.
\end{equation}
Since $\bm{U}(D^{\ast})$ is a direct sum of uniform matroids,  
$\bm{U}(D^{\ast})$ is also a matroid (see Exercise~7 of pp.16 and Example~1.2.7 in \cite{O92}). 
We call $I \in \mathcal{J}(D^{\ast})$ a {\it base of $\bm{U}(D)$} 
when (\ref{eq1:matroids}) holds with equality. 

For two matroids $\bm{M}(D^{\ast})$ and $\bm{U}(D^{\ast})$, 
we call an arc set $I \subseteq A^{\ast}$
{\it $D^{\ast}$-intersection} when $I \in \mathcal{I}(D^{\ast})\cap \mathcal{J}(D^{\ast})$. 
If a $D^{\ast}$-intersection $I$ is a base of both $\bm{M}(D^{\ast})$ and $\bm{U}(D^{\ast})$, 
we call $I$ {\it complete}. 

When we are given a weight function $w \colon A^{\ast}\to \mathbb{R}_+$ where $\mathbb{R}_+$ 
denotes the set of non-negative reals, 
we define the weight of $I\subseteq A^{\ast}$ (denoted by $w(I)$) 
by the sum of weights of all arcs $I$. 
The {\it weighted matroid intersection problem} (in short $\mbox{WMI}$) is then defined as follows~\cite{F81}. 
\begin{center}
\begin{tabularx}{150mm}{rX}
\hline
{\bf Problem$\colon$} & $\mbox{WMI}(D^{\ast})$ 
\\
\hline 
{\bf Input$\colon$} &
$D^{\ast}$ of a directed graph  $D$ 
and a weight function $w\colon A^{\ast}\to \mathbb{R}_+$;
\\
{\bf Output$\colon$} &
a complete $D^{\ast}$-intersection $I$ whose weigh is minimum among all complete 
$D^{\ast}$-intersections, if one exists.
\\
\hline
\end{tabularx}
\end{center}

\begin{lemma} \label{lemma:matroid}
We can solve $\mbox{WMI}(D^{\ast})$ in 
$O(M|A^{\ast}|^6)$ time where $M=\sum_{v \in V}f(R_D(v))$. 
\end{lemma}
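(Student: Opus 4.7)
The plan is to appeal to a standard algorithm for weighted matroid intersection as in \cite{F81}, verifying that the two matroids $\bm{M}(D^{\ast})$ and $\bm{U}(D^{\ast})$ admit efficient independence oracles, and then tracking the number of augmentations together with the per-augmentation cost.

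First I would observe that the oracle for $\bm{U}(D^{\ast})$ is essentially trivial: after precomputing the sets $R_D(v)$ by a linear-time reachability routine, independence of $I \subseteq A^{\ast}$ is decided by checking the inequality (\ref{eq1:matroids}) at each $v \in V$, in $O(|A^{\ast}|)$ time. For $\bm{M}(D^{\ast})$, which is the matroid union of the graphic matroids $\bm{M}_{i,j}(D^{\ast})$, I would invoke the standard matroid-union independence oracle via a reduction to matroid partition; independence in each summand $\bm{M}_{i,j}(D^{\ast})$ is just a forest test on the subgraph with vertex set $V^i_D \cup \{s^{\ast}\}$, checked by union-find, and the union step adds only a polynomial factor in $|A^{\ast}|$ and in the number of summands (which is $f(S) \le |A|$).

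Next I would bound the number of augmentations. Every $I \in \mathcal{J}(D^{\ast})$ satisfies $|I| \le \sum_{v \in V} f(R_D(v)) = M$, so any common independent set has cardinality at most $M$; hence the augmenting-path weighted matroid intersection algorithm performs at most $M$ augmentation steps before either producing a complete $D^{\ast}$-intersection or certifying that none exists. Combining this with the classical per-augmentation bound of $O(|A^{\ast}|^5)$ (assuming $O(|A^{\ast}|)$-time independence queries) yields the claimed total running time of $O(M|A^{\ast}|^6)$.

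The point requiring the most care is ensuring that the matroid-union independence oracle for $\bm{M}(D^{\ast})$ can be serviced within the budget implicit in Frank's per-augmentation analysis. A naive call to matroid partition may cost more than $O(|A^{\ast}|)$, in which case one would either amortize the partition work across the whole exchange-graph search inside a single augmentation or absorb the extra factor into the $|A^{\ast}|^6$ term; in either case the polynomial bound survives, with only the constant in the exponent needing revisiting. I do not expect any essentially new idea beyond this bookkeeping, since both matroids are of classical type and the only nontrivial structural fact — that the ranks of both matroids are controlled by $M$ — is immediate from their definitions.
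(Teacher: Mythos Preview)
Your proposal is correct and follows essentially the same route as the paper: invoke Frank's weighted matroid intersection algorithm, bound the common-independent-set size by $M$, and service the oracle for the union matroid $\bm{M}(D^{\ast})$ via matroid partition. The paper makes the bookkeeping explicit by quoting Knuth's $O(|E|^3\gamma)$ bound for the union-independence test (giving $O(|A^{\ast}|^4)$ per query) together with Frank's $O(k|E|^3 + k|E|^2\gamma)$ bound, which combine to exactly $O(M|A^{\ast}|^6)$ --- this is precisely the ``extra factor'' you anticipated absorbing in your final paragraph.
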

\begin{proof}
To prove the lemma, we use the following theorem concerning a matroid. 

\begin{theorem}[\cite{K74}] \label{theorem:matroid-knuth}
Given a matroid $\bm{M}=(E,\mathcal{I})$ 
which is a union of $t$ ($\le |E|$) matroids $\bm{M}_1=(E,\mathcal{I}_1),\ldots,\bm{M}_t=(E,\mathcal{I}_t)$, 
we can test if a given set belongs to 
$\mathcal{I}$ in $O(|E|^3\gamma)$ time where $\gamma$ is the time required 
to test if a given set belongs to $\mathcal{I}_1,\ldots,\mathcal{I}_t$. 
\end{theorem}

\begin{theorem}[\cite{F81}] \label{theorem:matroid-frank}
Given two matroids $\bm{M}_1=(E,\mathcal{I}_1)$ and $\bm{M}_2=(E,\mathcal{I}_2)$ with a weight function 
$w\colon E \to \mathbb{R}_+$ and a non-negative integer $k \in \mathbb{Z}_+$, 
we can find $I\in \mathcal{I}_1\cap \mathcal{I}_2$ with $|I|=k$ whose weight is minimum among all 
$I'\in \mathcal{I}_1 \cap \mathcal{I}_2$ with $|I'|=k$ in $O(k|E|^3 + k|E|^2\gamma)$ time if one 
exists where $\gamma$ is the time required to test if 
a given set belongs to both $\mathcal{I}_1$ and $\mathcal{I}_2$. 
\end{theorem}

We consider the time required to test if a given set 
belongs to both $\mathcal{I}(D^{\ast})$ and $\mathcal{J}(D^{\ast})$. 
Since it is not difficult to see that 
we can test is a given set belongs to each $\mathcal{I}_{i,j}(D^{\ast})$
in $O(|A^{\ast}|)$ time, 
we can test if a given set belongs to $\mathcal{I}(D^{\ast})$ in
$O(|A^{\ast}|^4)$ time from Theorem~\ref{theorem:matroid-knuth}. 
For $\mathcal{J}(D^{\ast})$, the time complexity is clearly $O(|A^{\ast}|)$ time. 
The size of every complete $D^{\ast}$-intersection is equal to 
$M$ from (\ref{eq1:matroids}). 
From this discussion, the total time required for solving $\mbox{WMI}(D^{\ast})$ is $O(M|A^{\ast}|^6)$ from 
Theorem~\ref{theorem:matroid-frank}. 
\end{proof}

\subsection{Results from \cite{KKT08}}

In this section, we introduce results concerning packing of in-trees given by 
Kamiyama et al.~\cite{KKT08} which plays a crucial role in this paper.  

\begin{theorem}[\cite{KKT08}] \label{theorem:KKT08}
Given a directed graph  $D=(V,A,S,f)$, the following three statements are equivalent$\colon$
\begin{enumerate}
\item For every $v \in V$, $\lambda(v,s^{\ast};D^{\ast})\ge f(R_D(v))$ holds.
\item There exists a $D$-canonical set of arc-disjoint in-trees. 
\item There exists a complete $D^{\ast}$-intersection. 
\end{enumerate}
\end{theorem}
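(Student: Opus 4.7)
The plan is to prove $(2)\Leftrightarrow(3)$ as an elementary bijective correspondence, $(2)\Rightarrow(1)$ as a direct Menger argument, and $(1)\Rightarrow(2)$ as the main work, by an inductive construction in the spirit of Lov\'asz's proof of Edmonds' arborescence packing theorem.

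For $(2)\Leftrightarrow(3)$: a $D$-canonical set of arc-disjoint in-trees $\{T_{i,j}\}$ becomes a complete $D^{\ast}$-intersection by augmenting each $T_{i,j}$ with a distinct parallel $(s_i,s^{\ast})$-arc of $D^{\ast}$ and taking the union (the resulting arc set is independent in $\bm{M}(D^{\ast})$ via the given partition, has out-degree exactly $f(R_D(v))$ at each $v\in V$, and so is a base of $\bm{U}(D^{\ast})$); conversely, any complete $D^{\ast}$-intersection decomposes as $\bigcup_{i,j}I_{i,j}$ with $(V^i_D\cup\{s^{\ast}\},I_{i,j})$ a tree, the $\bm{U}(D^{\ast})$-bound forces each such tree to be an in-arborescence into $s^{\ast}$, and stripping its unique $(s_i,s^{\ast})$-arc leaves a $(D,s_i)$-in-tree. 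Arc-disjointness is automatic from the partition.

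For $(2)\Rightarrow(1)$: fix $v\in V$ and an arc-disjoint family $\{T_{i,j}\}$. Each $T_{i,j}$ with $s_i\in R_D(v)$ contains $v$ and hence a $v$-to-$s_i$ path; concatenating with the $j$-th parallel $(s_i,s^{\ast})$-arc of $D^{\ast}$ gives a $v$-to-$s^{\ast}$ walk, and the resulting $\sum_{s_i\in R_D(v)}f(s_i)=f(R_D(v))$ walks are arc-disjoint by construction, so Menger's theorem yields $\lambda(v,s^{\ast};D^{\ast})\ge f(R_D(v))$.

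The core work is $(1)\Rightarrow(2)$, which I would prove by induction on $f(S)$. Pick $s_k$ with $f(s_k)\ge 1$ and attempt to peel off a single $(D,s_k)$-in-tree $T$ such that the residual instance, obtained by removing the arcs of $T$ from $D$ and decreasing $f(s_k)$ by one, still satisfies condition (1); the inductive hypothesis then supplies the remaining $f(S)-1$ in-trees. To build $T$, grow it backwards from $s_k$ one incoming arc at a time, at each step selecting an arc $(u,v)$ with $v\in V(T)$, $u\notin V(T)$ whose removal preserves (1). The main obstacle is showing that such an arc is always available. In Edmonds' single-root setting this follows from a submodular uncrossing argument on the family of cuts $X\subseteq V\setminus V(T)$, but here the per-vertex demand $f(R_D(v))$ is non-uniform, so uncrossing must be restricted to the sub-family of cuts $X$ with $s_k\in R_D(X)$ --- the only ones whose deficit could be worsened by the decrement $f(s_k)\mapsto f(s_k)-1$. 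Closedness of this sub-family under intersection and union follows by combining the submodularity of $|\delta_{D^{\ast}}(\cdot)|$ with the identity $R_D(X\cup Y)=R_D(X)\cup R_D(Y)$, and from this one extracts the desired feasible arc, completing the induction.
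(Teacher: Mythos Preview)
The paper does not prove this theorem; it is quoted from \cite{KKT08} and used as a black box. So there is no in-paper proof to compare against, and the relevant question is whether your sketch stands on its own.

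Your implications $(2)\Rightarrow(1)$ and $(2)\Rightarrow(3)$ are fine, and your plan for $(1)\Rightarrow(2)$ is exactly the approach taken in \cite{KKT08}: peel off one in-tree at a time via a Lov\'asz-style backward growth, using submodular uncrossing on the sub-family of tight sets $W$ with $s_k\in R_D(W)$. That sketch is correct in outline, though the uncrossing step in the non-uniform setting is precisely where the technical work of \cite{KKT08} lives, and your one-sentence treatment of it is optimistic; in particular one also needs that $f(R_D(\cdot))$ behaves well on unions and intersections of the relevant cuts, not just $R_D(\cdot)$ itself.

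There is, however, a genuine gap in your $(3)\Rightarrow(2)$. You assert that once $I$ is partitioned into spanning trees $I_{i,j}$ of $V^i_D\cup\{s^{\ast}\}$, the $\bm{U}(D^{\ast})$-base condition forces each $I_{i,j}$ to be an in-arborescence into $s^{\ast}$. This is false even in the single-root case. Take $V=\{a,b,s_1\}$, $f(s_1)=2$, with arcs $a\to b$, $b\to a$, $a\to s_1$, $b\to s_1$, and the two parallel arcs $s_1\to s^{\ast}$ in $D^{\ast}$. Let $I$ be all six arcs; then $I$ is a base of $\bm{U}(D^{\ast})$ (out-degree $2$ at each of $a,b,s_1$). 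The partition $I_{1,1}=\{a\to b,\,a\to s_1,\,s_1\to s^{\ast}\}$, $I_{1,2}=\{b\to a,\,b\to s_1,\,s_1\to s^{\ast}\}$ exhibits $I$ as a base of $\bm{M}(D^{\ast})$, yet neither $I_{1,j}$ is an in-arborescence (one vertex has out-degree $2$, another $0$). Of course $I$ \emph{also} admits a good partition, but your argument does not produce it. The honest route for $(3)\Rightarrow(2)$ is either to go through $(1)$ (e.g.\ via the matroid-union rank formula to certify the cut condition) or to give an explicit exchange argument that converts an arbitrary tree partition into an arborescence partition; neither is the one-liner you wrote.
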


Although the following theorem is not explicitly proved in \cite{KKT08},
we can easily obtain it from the proof of Theorem~\ref{theorem:KKT08} in \cite{KKT08}. 

\begin{theorem}[\cite{KKT08}] \label{theorem2:KKT08}
Given a directed graph  $D=(V,A,S,f)$ which satisfies the condition of 
Theorem~\ref{theorem:KKT08}, we can find a $D$-canonical set of arc-disjoint
in-trees in $O(M^2|A|^2)$ time where $M=\sum_{v \in V}f(R_D(v))$.
\end{theorem}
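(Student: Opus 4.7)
The plan is to extract an explicit algorithm from the existence proof of Theorem~\ref{theorem:KKT08}, specifically from the implication (1)~$\Rightarrow$~(2). The natural structure of such a proof, following Edmonds' classical in-arborescence packing theorem, is iterative peeling: at each step one constructs a single in-tree and removes its arcs while preserving the cut hypothesis on local connectivities.

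Concretely, I would proceed as follows. Starting from $D=(V,A,S,f)$ satisfying $\lambda(v,s^{\ast};D^{\ast})\ge f(R_D(v))$ for every $v\in V$, pick some $s_i\in S$ with $f(s_i)\ge 1$ and construct a $(D,s_i)$-in-tree $T$ such that the reduced instance $D'=(V,A\setminus E(T),S,f')$, with $f'(s_i)=f(s_i)-1$ and $f'(s_j)=f(s_j)$ for $j\ne i$, still satisfies the hypothesis of Theorem~\ref{theorem:KKT08}. The existence of such a ``safe'' in-tree is exactly the content of the inductive step inside the existence proof in \cite{KKT08}, which proceeds by an arc-splitting / augmenting-path argument. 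Having removed $E(T)$, recurse on $D'$. After $f(S)$ iterations we obtain the desired $D$-canonical set of arc-disjoint in-trees.

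For the running time, recall from~(\ref{eq1:directed graphs}) and the assumption $f(s_i)\le|A|$ that $|A^{\ast}|=O(|A|)$, and note that $f(S)\le M$. Each iteration must accomplish two tasks: find an in-tree $T$ rooted at some $s_i$ that spans $V^i_D$, and verify (or use structure from) the connectivity condition so the residual instance is still feasible. Using the augmenting-path / max-flow machinery underlying the proof of Theorem~\ref{theorem:KKT08}, finding one such in-tree can be done in $O(M|A|)$ time: an augmenting path costs $O(|A|)$, and one needs up to $O(M)$ such computations to assemble an in-tree spanning $V^i_D$ while respecting all relevant cut constraints. Multiplying by the $f(S)\le M$ outer iterations yields $O(M^2|A|^2)$ overall.

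The main obstacle is precisely the ``safety'' step: at each iteration one must exhibit an in-tree $T$ whose removal preserves $\lambda(v,s^{\ast};(D')^{\ast})\ge f'(R_{D'}(v))$ for every $v$, and one must do this within the claimed per-iteration budget. This is where the proof leans most heavily on the detailed construction in \cite{KKT08}: the inductive step there already identifies a concrete augmenting-type procedure that produces a safe tree, and the contribution of Theorem~\ref{theorem2:KKT08} is essentially a careful accounting of its running time rather than a fresh combinatorial argument. Once that accounting is carried out, the stated $O(M^2|A|^2)$ bound follows.
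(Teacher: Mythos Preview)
The paper itself supplies no proof of this theorem: it is simply cited from \cite{KKT08}, with the remark that it ``can easily be obtained from the proof of Theorem~\ref{theorem:KKT08} in \cite{KKT08}.'' There is therefore no argument in the present paper to compare against; your proposal is an attempt to reconstruct what that external proof presumably does. At the level of strategy---iteratively peel off one ``safe'' $(D,s_i)$-in-tree at a time, decrement $f(s_i)$, and recurse---your outline is exactly the Edmonds-style argument one expects, and your identification of the key difficulty (showing that a safe tree can be found within the per-iteration budget) is accurate.

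That said, your running-time accounting has two concrete slips. First, the claim that $|A^{\ast}|=O(|A|)$ does not follow from $f(s_i)\le|A|$: equation~(\ref{eq1:directed graphs}) gives $|A^{\ast}|=|A|+f(S)$, and $f(S)\le d\,|A|$ with $d=|S|$ possibly as large as $|V|$. (You do not actually need this claim later, but it is stated as if it were an ingredient.) Second, and more seriously, your arithmetic does not close: $f(S)\le M$ outer iterations times $O(M|A|)$ per iteration yields $O(M^{2}|A|)$, not $O(M^{2}|A|^{2})$. To hit the stated bound you need an extra factor of $|A|$ somewhere inside each iteration---for example, a $(D,s_i)$-in-tree has up to $|V^i_D|-1=O(|A|)$ arcs, and choosing each arc safely may cost an $O(M|A|)$ connectivity or cut check, giving $O(M|A|^{2})$ per tree and hence $O(M^{2}|A|^{2})$ overall. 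Without spelling out that inner loop, your derivation of the exponent on $|A|$ is unsupported.
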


From Theorem~\ref{theorem:KKT08}, we obtain the following corollary. 

\begin{corollary} \label{corollary:KKT08}
Given a directed graph  $D=(V,A,S,f)$ and an arc set $B$ with $A\cap B=\emptyset$ 
which is parallel to 
$A$, the following three statements are equivalent$\colon$
\begin{enumerate}
\item $B$ is a $D^{\ast}$-rooted connector.
\item There exists a $(D+B)$-canonical set of arc-disjoint in-trees. 
\item There exists a complete $(D+B)^{\ast}$-intersection. 
\end{enumerate}
\end{corollary}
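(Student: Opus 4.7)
The plan is to derive this corollary as a direct instantiation of Theorem~\ref{theorem:KKT08}, applied to the augmented directed graph $D+B$ in place of $D$. The key observation is that the operation of adding $s^{\ast}$ with $f(s_i)$ parallel arcs to each $s_i$ commutes with adding the arc set $B$ inside $D$, so that $(D+B)^{\ast} = D^{\ast}+B$, and in particular the matroids $\bm{M}((D+B)^{\ast})$, $\bm{U}((D+B)^{\ast})$ and $\bm{M}(D^{\ast}+B)$, $\bm{U}(D^{\ast}+B)$ coincide.

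First I would verify that the hypotheses of Theorem~\ref{theorem:KKT08} make sense for $D+B$: since $B$ is parallel to $A$, equations (\ref{eq3:directed graphs}) and (\ref{eq4:directed graphs}) give $R_{D+B}(v)=R_D(v)$ for every $v\in V$ and $V^i_{D+B}=V^i_D$ for every $i=1,\ldots,d$. In particular the notion of a $(D+B,s_i)$-in-tree agrees with that of spanning the same vertex set $V^i_D$, and the function $f$ on $S$ is unchanged, so a $(D+B)$-canonical set of in-trees has exactly the same shape as a $D$-canonical one except that its arcs lie in $A\cup B$.

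Next I would rewrite the three conditions of Theorem~\ref{theorem:KKT08} applied to $D+B$. Condition~(1) reads $\lambda(v,s^{\ast};(D+B)^{\ast})\ge f(R_{D+B}(v))$ for every $v\in V$; using $(D+B)^{\ast}=D^{\ast}+B$ and $R_{D+B}(v)=R_D(v)$, this is exactly the definition of $B$ being a $D^{\ast}$-rooted connector, hence matches statement~(1) of the corollary. Condition~(2) is literally statement~(2). Condition~(3), which asserts the existence of a complete $(D+B)^{\ast}$-intersection, matches statement~(3) once we again identify $(D+B)^{\ast}$ with $D^{\ast}+B$.

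I do not see a genuine obstacle here: the entire content is bookkeeping to confirm that the two constructions $D\mapsto D^{\ast}$ and $D\mapsto D+B$ commute and that the auxiliary data $f$, $R_D(\cdot)$, $V^i_D$ used to define the matroids $\bm{M}$ and $\bm{U}$ are unchanged under addition of parallel arcs. The only mildly delicate point is making explicit that the parallelism condition on $B$ is what prevents $B$ from introducing new arcs incident to $s^{\ast}$ and what preserves reachability sets, so the substitution of $D+B$ for $D$ in Theorem~\ref{theorem:KKT08} is legitimate.
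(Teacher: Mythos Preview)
Your proposal is correct and matches the paper's own approach: both arguments reduce the corollary to Theorem~\ref{theorem:KKT08} applied to $D+B$, using $(D+B)^{\ast}=D^{\ast}+B$ together with $R_{D+B}(v)=R_D(v)$ to identify condition~(1) of that theorem with the definition of a $D^{\ast}$-rooted connector. The paper splits this into separate $1\to 2$ and $2\to 1$ verifications while you phrase it as a single instantiation, but the content is identical.
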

\begin{proof}
The equivalence of the statements~2 and 3 follows from Theorem~\ref{theorem:KKT08}.\\
{\bf 1$\to$2$\colon$}
Since $B$ is parallel to $A$, we clearly have 
\begin{equation} \label{eq2:directed graphs}
(D+B)^{\ast}=D^{\ast}+B. 
\end{equation}
Since $B$ is a $D^{\ast}$-rooted connector,
we have for every $v \in V$
\begin{equation*}
\lambda(v,s^{\ast};(D+B)^{\ast})
\underbrace{=\lambda(v,s^{\ast};D^{\ast}+B)}_{\mbox{\small by (\ref{eq2:directed graphs})}}
\ge f(R_D(v))
\underbrace{=f(R_{D+B}(v))}_{\mbox{by \small (\ref{eq3:directed graphs})}}.
\end{equation*}
From this inequality and Theorem~\ref{theorem:KKT08}, 
this part follows. \\
{\bf 2$\to$1$\colon$}
Since there exists a $(D+B)$-canonical set of  
arc-disjoint in-trees,
we have for every $v\in V$
\begin{equation*}
\lambda(v,s^{\ast};D^{\ast}+B)
\underbrace{=\lambda(v,s^{\ast};(D+B)^{\ast})}_{\mbox{\small by (\ref{eq2:directed graphs})}}
\underbrace{\ge f(R_{D+B}(v))}_{\mbox{\small by Theorem~\ref{theorem:KKT08}}}
\underbrace{=f(R_D(v))}_{\mbox{\small by (\ref{eq3:directed graphs})}}. 
\end{equation*}
This proves that $B$ is a $D^{\ast}$-rooted connector.
\end{proof}

\section{An Algorithm for Covering by In-trees}
\label{Algorithm}

Given a directed graph  $D=(V,A,S,f)$, we present in this section an algorithm for $\mbox{CDGI}(D)$. 
The time complexity of the proposed algorithm is bounded by a polynomial in $f(S)$ and the size of $D$. 
We first prove that $\mbox{CDGI}(D)$ can be reduced to $\mbox{RAA-RA}(D^{\ast})$. 
After this, we show that $\mbox{RAA-RA}(D^{\ast})$ can be solved by using 
an algorithm for the weighted matroid intersection problem. 

\subsection{Reduction from $\mbox{CDGI}$ to $\mbox{RAA-RA}$}
\label{Reduction from CDGI to RAA-RA}

If $D=(V,A,S,f)$ is not $(S,f)$-proper, i.e., 
$|\delta_{D^{\ast}}(v)|>f(R_D(v))$ for some $v \in V$,  
there exists no feasible solution of $\mbox{CDGI}(D)$ 
since there can not be a $D$-canonical set of in-trees 
that covers $\delta_{D^{\ast}}(v)$ 
from the definition of a $D$-canonical set of in-trees.  
Thus, we assume in the subsequent discussion that $D$ is $(S,f)$-proper. 

\begin{proposition} \label{proposition1:raa-ra}
Given an $(S,f)$-proper directed graph $D=(V,A,S,f)$, 
the size of a $D^{\ast}$-rooted connector is at least 
$\mbox{$\sum$}_{v \in V}f(R_D(v)) - (|A|+f(S))$.
\end{proposition}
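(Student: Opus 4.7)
The plan is to chain together Corollary~\ref{corollary:KKT08}, the size formula for a base of $\bm{U}$, and the trivial upper bound $|I|\le |A^{*}\cup B|$ on any intersection $I$.

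First I would fix an arbitrary $D^{\ast}$-rooted connector $B$. By Corollary~\ref{corollary:KKT08} (the implication 1$\to$3), the existence of $B$ guarantees a complete $(D+B)^{\ast}$-intersection $I$. Because $I$ is a base of $\bm{U}((D+B)^{\ast})$, the defining inequality~(\ref{eq1:matroids}) holds with equality at every $v$, so summing over all $v \in V \cup \{s^{\ast}\}$ (and using the fact that each arc has a unique tail) gives
\begin{equation*}
|I| \;=\; \sum_{v\in V} f(R_{D+B}(v)) \;=\; \sum_{v\in V} f(R_D(v)),
\end{equation*}
where the second equality uses (\ref{eq3:directed graphs}) and the fact that $B$ is parallel to $A$.

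Next I would bound $|I|$ from above by the cardinality of the ambient arc set. Since $B$ is parallel to $A$ (and by definition contains no arcs parallel to those entering $s^{\ast}$), $B$ is disjoint from $A^{\ast}$, and by (\ref{eq2:directed graphs}) the arc set of $(D+B)^{\ast}$ is exactly $A^{\ast}\cup B$, of cardinality $|A^{\ast}|+|B| = |A|+f(S)+|B|$ by (\ref{eq1:directed graphs}). Therefore
\begin{equation*}
\sum_{v\in V} f(R_D(v)) \;=\; |I| \;\le\; |A|+f(S)+|B|,
\end{equation*}
which is exactly the desired bound on $|B|$.

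There is no real obstacle here: the claim is essentially a counting consequence of the fact, supplied by Corollary~\ref{corollary:KKT08}, that a $D^{\ast}$-rooted connector forces the existence of a common base of $\bm{M}((D+B)^{\ast})$ and $\bm{U}((D+B)^{\ast})$, together with the uniform-matroid identity giving the base size of $\bm{U}$. The only minor care needed is to verify that $B$ and $A^{\ast}$ are disjoint so that the cardinality of $A^{\ast}\cup B$ adds cleanly, and that the lower bound is vacuous (hence still valid) when $\sum_{v\in V} f(R_D(v)) \le |A|+f(S)$.
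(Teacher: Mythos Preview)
Your argument is correct, but it takes a much heavier route than the paper does. You invoke Corollary~\ref{corollary:KKT08} (which rests on the packing theorem of \cite{KKT08}) to produce a complete $(D+B)^{\ast}$-intersection, compute its size as $\sum_{v\in V} f(R_D(v))$ via the base identity for $\bm{U}$, and then bound it by the size of the ambient arc set $A^{\ast}\cup B$. This works, and your check that $A^{\ast}\cap B=\emptyset$ is fine.

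The paper, by contrast, observes directly from the definition of a $D^{\ast}$-rooted connector that $\lambda(v,s^{\ast};D^{\ast}+B)\ge f(R_D(v))$, hence $|\delta_{D^{\ast}+B}(v)|\ge f(R_D(v))$ for every $v\in V$ (since $\{v\}$ is itself a $v$--$s^{\ast}$ cut). Summing over $v\in V$ gives $|A^{\ast}|+|B|\ge \sum_{v\in V} f(R_D(v))$, and (\ref{eq1:directed graphs}) finishes. So the paper's proof is a two-line out-degree count that uses nothing beyond the definition of local arc-connectivity, whereas yours imports the full strength of Theorem~\ref{theorem:KKT08} through Corollary~\ref{corollary:KKT08}. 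Your approach does have the merit of foreshadowing the matroid-intersection viewpoint used later in Section~3.2, but for this proposition it is considerable overkill.
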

\begin{proof}
Let $B$ be a $D^{\ast}$-rooted connector. 
For every $v\in V$, $|\delta_{D^{\ast}+B}(v)|\ge f(R_D(v))$ holds 
from the definition of a $D^{\ast}$-rooted connector.
Thus, the number of arcs of $D^{\ast}+B$ is at least $\sum_{v \in V}f(R_D(v))$. 
Since the number of arcs of $D^{\ast}$ is equal to 
$|A|+f(S)$ from (\ref{eq1:directed graphs}), the proposition holds. 
\end{proof}
For an $(S,f)$-proper directed graph $D=(V,A,S,f)$, we define ${\sf opt}_D$ by 
\begin{equation} \label{eq1:rooted}
{\sf opt}_D=\mbox{$\sum$}_{v \in V}f(R_D(v)) - (|A|+f(S)).
\end{equation}
From Proposition~\ref{proposition1:raa-ra}, the size of a $D^{\ast}$-rooted 
connector is at least ${\sf opt}_D$. 

\begin{lemma} \label{lemma2:raa-ra}
Given an $(S,f)$-proper
directed graph  $D=(V,A,S,f)$, there exists a feasible solution of $\mbox{CDGI}(D)$
if and only if there exists a $D^{\ast}$-rooted connector whose size is equal to ${\sf opt}_D$. 
\end{lemma}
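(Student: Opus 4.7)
The plan is to turn the lemma into a careful double-counting statement relating the multiplicity of arcs used by a canonical in-tree cover to the number of parallel copies needed to make those in-trees arc-disjoint. In both directions, the bridge is Corollary~\ref{corollary:KKT08}, which lets us trade ``feasible $D^{\ast}$-rooted connector $B$'' for ``arc-disjoint canonical in-trees in $D+B$''.

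First I would set up the key count. For any $D$-canonical set of in-trees $\mathcal{T}$, each $(D,s_i)$-in-tree spans $V^i_D$ and therefore contains $|V^i_D|-1$ arcs. Summing over all trees and using
\[
\sum_{i=1}^d f(s_i)\,|V^i_D|=\sum_{i=1}^d f(s_i)\,|\{v\in V:s_i\in R_D(v)\}|=\sum_{v\in V}f(R_D(v)),
\]
the total arc-multiplicity in $\mathcal{T}$ equals $\sum_{v\in V}f(R_D(v))-f(S)$. The same identity holds for $\mathcal{T}$ in $D+B$ when $B$ is parallel to $A$, thanks to (\ref{eq3:directed graphs}) and (\ref{eq4:directed graphs}). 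Combined with (\ref{eq1:rooted}), the ``excess'' of this multiplicity over $|A|$ is exactly ${\sf opt}_D$.

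For the direction ($\Rightarrow$), given a feasible $\mathcal{T}$ for $\mbox{CDGI}(D)$, let $k_e\ge 1$ be the number of in-trees of $\mathcal{T}$ that contain arc $e\in A$. I would build $B$ by adding, for each $e\in A$, exactly $k_e-1$ arcs parallel to $e$ (disjoint from $A$), and then relabel $k_e-1$ of the occurrences of $e$ in $\mathcal{T}$ to the new parallel copies. This turns $\mathcal{T}$ into a $(D+B)$-canonical set of arc-disjoint in-trees (using (\ref{eq4:directed graphs}) to see that the spanning requirement is preserved), so Corollary~\ref{corollary:KKT08} yields that $B$ is a $D^{\ast}$-rooted connector; the count $|B|=\sum_{e\in A}(k_e-1)=\bigl(\sum_{v\in V}f(R_D(v))-f(S)\bigr)-|A|={\sf opt}_D$ then matches the bound of Proposition~\ref{proposition1:raa-ra}.

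For the direction ($\Leftarrow$), suppose $B$ is a $D^{\ast}$-rooted connector with $|B|={\sf opt}_D$. By Corollary~\ref{corollary:KKT08} there is a $(D+B)$-canonical set $\mathcal{T}$ of arc-disjoint in-trees. The arcs used by $\mathcal{T}$ lie in $A\cup B$, and by the count above the number of distinct arcs used is $\sum_{v\in V}f(R_D(v))-f(S)=|A|+|B|=|A\cup B|$, so $\mathcal{T}$ uses every arc of $A\cup B$ exactly once. Finally I would replace, in each in-tree, every arc $e'\in B$ by its parallel counterpart $e\in A$; since each $(D+B,s_i)$-in-tree is a tree in the undirected sense it cannot already contain such a parallel $e$, so the replacement yields a genuine $(D,s_i)$-in-tree, and the resulting $D$-canonical set covers $A$.

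The main obstacle I anticipate is this last replacement step: one must rule out the pathological case in which an in-tree of $\mathcal{T}$ already uses an arc of $A$ parallel to the $B$-arc being relabelled. This is handled by the observation that a tree (in the undirected sense used throughout the paper) cannot contain two arcs with the same endpoints. Everything else is bookkeeping with (\ref{eq1:directed graphs})--(\ref{eq4:directed graphs}) and (\ref{eq1:rooted}).
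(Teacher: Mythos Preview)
Your proposal is correct and follows essentially the same route as the paper: in both directions you pass through Corollary~\ref{corollary:KKT08} to trade between a $D^{\ast}$-rooted connector $B$ and a $(D+B)$-canonical set of arc-disjoint in-trees, and the size/coverage claims are settled by the same double count (you organize it per in-tree via $\sum_i f(s_i)(|V^i_D|-1)$, while the paper organizes it per vertex via $\sum_{e\in\delta_D(v)}(|P_e|-1)=f(R_D(v))-|\delta_{D^{\ast}}(v)|$, but these are the same identity). Your explicit remark that an in-tree, being a tree in the undirected sense, cannot contain two parallel arcs is a point the paper leaves implicit in its procedure {\sf Replace}.
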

\begin{proof}
{\bf Only if-part$\colon$}
Suppose there exists a feasible solution of $\mbox{CDGI}(D)$, i.e., 
there exists a $D$-canonical set $\mathcal{T}$ of in-trees which covers $A$. 
For each $i=1,\ldots,d$, we denote $f(s_i)$ $(D,s_i)$-in-trees of $\mathcal{T}$ 
by $T_{i,1},\ldots,T_{i,f(s_i)}$.
For each $e \in A$, let $P_e=\{(i,j)\colon e \mbox{ is contained in }T_{i,j}\}$.
Since $\mathcal{T}$ covers $A$, 
each $e \in A$ is contained in at least one in-tree of $\mathcal{T}$. 
Thus, $|P_e|\ge 1$ holds for every $e \in A$. 
We define an arc set $B$ by 
$B = \bigcup_{e \in A}\{|P_e| - 1\mbox{ copies of }e\}$. 
We will prove that $B$ is a $D^{\ast}$-rooted 
connector whose size is equal to ${\sf opt}_D$.  

We first prove $|B|={\sf opt}_D$. For this, 
we show that for every $v \in V$ 
\begin{equation} \label{claim1:lemma2:raa-ra}
\mbox{$\sum$}_{e \in \delta_D(v)}(|P_e|-1)=f(R_D(v))-|\delta_{D^{\ast}}(v)|.
\end{equation}
Let us first consider $v \notin S$. 
For $s_i \in R_D(v)$, $T_{i,j}$ contains $v$
since $T_{i,j}$ spans $V^i_D$ and $s_i$ is reachable from $v$. 
Hence, 
since $T_{i,j}$ is an in-tree and $v$ is not a root of $T_{i,j}$ from $v \notin S$, 
$T_{i,j}$ contains exactly one arc $e \in \delta_{D}(v)$, i.e., 
$(i,j)$ is contained in $P_e$ for exactly one arc $e \in \delta_D(v)$. 
Thus, 
$\mbox{$\sum$}_{e \in \delta_D(v)}|P_e|=
\sum_{s_i \in R_D(v)}f(s_i)
=f(R_D(v))$.
From this equation and since $|\delta_D(v)|=|\delta_{D^{\ast}}(v)|$ follows from $v \notin S$, 
(\ref{claim1:lemma2:raa-ra}) holds. 
In the case of $v \in S$, for $s_i \in R_D(v)\setminus \{v\}$, 
$(i,j)$ is contained in $P_e$ for exactly one arc
$e \in \delta_D(v)$ as in the case of $v \notin S$. Thus, 
$\mbox{$\sum$}_{e \in \delta_D(v)}|P_e|=f(R_D(v))-f(v)$. 
From this equation and $|\delta_{D^{\ast}}(v)|=|\delta_D(v)|+f(v)$, 
\begin{align*} 
\mbox{$\sum$}_{e \in \delta_D(v)}(|P_e|-1)
= f(R_D(v))-f(v)-|\delta_D(v)| 
= f(R_D(v))-|\delta_{D^{\ast}}(v)|. 
\end{align*}
This completes the proof of (\ref{claim1:lemma2:raa-ra}).
Since $B$ contains $|P_e|-1$ copies of $e\in A$, 
\begin{align*} 
|B| &= \mbox{$\sum$}_{v \in V}\mbox{$\sum$}_{e \in \delta_D(v)}(|P_e|-1) \nonumber \\
&=\mbox{$\sum$}_{v \in V}(f(R_D(v))-|\delta_{D^{\ast}}(v)|) 
\ \ \ (\mbox{from (\ref{claim1:lemma2:raa-ra}}))\nonumber \\
&= {\sf opt}_D \ \ \ (\mbox{from (\ref{eq1:directed graphs}) and (\ref{eq1:rooted})}). 
\end{align*}

What remains is to prove that $B$ is a $D^{\ast}$-rooted connector. 
From Corollary~\ref{corollary:KKT08}, it is sufficient to  
prove that there exists a $(D+B)$-canonical set of arc-disjoint in-trees.
For this, we will construct from $\mathcal{T}$ a set $\mathcal{T}'$ of arc-disjoint in-trees 
which consists of $T'_{i,1},\ldots,T'_{i,f(s_i)}$ for $i=1,\ldots,d$,
and we prove that $\mathcal{T}'$ is a $(D+B)$-canonical set of in-trees.
Each $T'_{i,j}$ is constructed from $T_{i,j}$ as follows. 
When $e \in A$ is contained in more than one in-tree of $\mathcal{T}$, 
in order to construct $\mathcal{T}'$ from $\mathcal{T}$, 
we need to replace $e$ of $T_{i,j}$ by an arc in $B$ which is parallel to $e$
for every $(i,j)\in P_e$ except one in-tree.
For $(i_{\min},j_{\min})\in P_e$ which is lexicographically smallest in $P_e$, 
we allow $T'_{i_{\min},j_{\min}}$ to use $e$, while for $(i,j)\in P_e\setminus (i_{\min},j_{\min})$, 
we replace $e$ of $T_{i,j}$ by an arc in $B$ which is parallel to $e$ 
so that for distinct $(i,j), (i',j') \in P_e\setminus (i_{\min},j_{\min})$, 
the resulting $T'_{i,j}$ and $T'_{i',j'}$ contain distinct arcs which are parallel to $e$, 
respectively (see Figure~\ref{example_replace}).

\begin{figure}[h]
\begin{center}
\includegraphics[width=1.5cm]{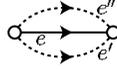}
\end{center}
\caption{\small Illustration of the replacing operation. Let $e$ be an arc in $A$, and 
let $e',e''$ be arcs in $B$. Assume that $P_e=\{(1,1),(1,2),(2,1)\}$. In this case, $T_{1,1}$, 
$T_{1,2}$ and $T_{2,1}$ contain $e$. Then, $T'_{1,1}$ contains $e$, $T'_{1,2}$ contains $e'$, 
and $T'_{2,1}$ contains $e''$.}
\label{example_replace}
\end{figure}
 
We will do this operation for every $e \in A$.
Let $\mathcal{T}'$ be the set of in-trees obtained by performing the above operation for every $e \in A$. 
Here we show that $\mathcal{T}'$ is a $(D+B)$-canonical set of arc-disjoint in-trees. 
Since $T'_{i,j}$ and $T'_{i',j'}$ are arc-disjoint for $(i,j)\neq (i',j')$ 
from the way of constructing $\mathcal{T}'$, 
it is sufficient to prove that $T'_{i,j}$ is a $(D+B,s_i)$-in-tree. 
Since $T'_{i,j}$ is constructed by replacing arcs of $T_{i,j}$ by 
the corresponding parallel arc in $B$ and 
$T_{i,j}$ is an in-tree rooted at $s_i$,  
$T'_{i,j}$ is also an in-tree rooted at $s_i$. Since $T_{i,j}$ spans $V^i_D$ and from (\ref{eq4:directed graphs}), 
$T'_{i,j}$ spans $V^i_{D+B}$. Hence, $T'_{i,j}$ 
is a $(D+B,s_i)$-in-tree. This completes the proof. \\
{\bf If-part$\colon$}
Let $B$ be a $D^{\ast}$-rooted connector with $|B|={\sf opt}_D$. 
From Corollary~\ref{corollary:KKT08}, there exists a $(D+B)$-canonical set $\mathcal{T}'$ of arc-disjoint 
in-trees. For each $i=1,\ldots,d$, we denote $f(s_i)$ $(D+B,s_i)$-in-trees of $\mathcal{T}'$ 
by $T'_{i,1},\ldots,T'_{i,f(s_i)}$.
We will prove that we can construct from $\mathcal{T}'$ 
a $D$-canonical set of in-trees covering $A$. 
We first construct from $\mathcal{T}'$ a set $\mathcal{T}$ of in-trees which consists of $T_{i,j}$ for $i=1,\ldots,d$
and $j=1,\ldots,f(s_i)$ by the following procedure {\sf Replace}.
\begin{center}
\fbox{
\begin{minipage}{150mm}
{\bf Procedure {\sf Replace}$\colon$}
For each $i=1,\ldots,d$ and $j=1,\ldots,f(s_i)$, 
set $T_{i,j}$ to be a directed graph obtained from $T'_{i,j}$
by replacing every arc $e \in B$ which is contained in $T'_{i,j}$
by an arc in $A$ which is parallel to $e$. 
\end{minipage}
}
\end{center}

From now on, we prove that $\mathcal{T}$ is a $D$-canonical set of 
in-trees which covers $A$. 
It is not difficult to prove that $\mathcal{T}$ is a $D$-canonical set of in-trees 
from the definition of the procedure {\sf Replace} in the same manner as the last part 
of the proof of the ``only if-part''. 
Thus, it is sufficient to prove that $\mathcal{T}$ covers $A$. 
For this, we first show that 
$\mathcal{T}'$ covers $A\cup B$. 
From $A \cap B=\emptyset$, $|B|={\sf opt}_D$ and (\ref{eq1:rooted}), 
\begin{equation} \label{eq4:lemma1:raa-ra}
|A\cup B| = |A| + {\sf opt}_D = \mbox{$\sum$}_{v \in V}f(R_D(v)) - f(S). 
\end{equation}
Recall that each $v \in V$ is contained in $f(R_{D+B}(v))$ in-trees of $\mathcal{T}'$
from the definition of a $(D+B)$-canonical set of in-trees. 
Thus, since in-trees of $\mathcal{T}'$ are arc-disjoint, 
it holds for each $v \in V$ that  
the number of arcs in $\delta_{D+B}(v)$
which are contained in in-trees of $\mathcal{T}'$ is equal to 
\begin{equation}
\left\{
\begin{array}{ll}
f(R_{D+B}(v)), & \mbox{ if } v \in V\setminus S,\\
f(R_{D+B}(v)) - f(v), & \mbox{ if } v \in S.  
\end{array}
\right.
\end{equation}
Hence, the number of arcs in $A\cup B$ contained in in-trees of 
$\mathcal{T}'$ is equal to
\begin{align} \label{eq6:lemma1:raa-ra}
&\mbox{$\sum$}_{v \in V\setminus S}f(R_{D+B}(v))+ \mbox{$\sum$}_{v \in S}(f(R_{D+B}(v))-f(v)) \nonumber \\
&=\mbox{$\sum$}_{v \in V}f(R_{D+B}(v)) - f(S) 
=\mbox{$\sum$}_{v \in V}f(R_D(v))-f(S) \ \ \ (\mbox{from (\ref{eq3:directed graphs})}). 
\end{align}
Since any arc of $\mathcal{T}'$ is in $A\cup B$ and the number of arcs in $A\cup B$
is equal to that of $\mathcal{T}'$
from (\ref{eq4:lemma1:raa-ra}) and (\ref{eq6:lemma1:raa-ra}),
$\mathcal{T}'$ contains all arcs in $A$. 
Thus, $\mathcal{T}$ covers $A$ from the definition of the procedure {\sf Replace}. 
\end{proof}

As seen in the proof of the ``if-part'' of Lemma~\ref{lemma2:raa-ra}, 
if we can find a $D^{\ast}$-rooted connector $B$ with $|B|={\sf opt}_D$, 
we can compute a $D$-canonical set of in-trees which covers $A$ by using the procedure 
{\sf Replace} from a $(D+B)$-canonical set of arc-disjoint in-trees. 
Furthermore, we can construct a $(D+B)$-canonical set of arc-disjoint in-trees by using 
the algorithm of Theorem~\ref{theorem2:KKT08}. 
Since the optimal value of $\mbox{RAA-RA}(D^{\ast})$ is at least ${\sf opt}_D$ from Proposition~\ref{proposition1:raa-ra}, 
we can test if there exists a $D^{\ast}$-rooted connector whose size is equal to ${\sf opt}_D$
by solving $\mbox{RAA-RA}(D^{\ast})$. 
Assuming that we can solve $\mbox{RAA-RA}(D^{\ast})$, 
our algorithm for finding a $D$-canonical set of in-trees which covers $A$
called Algorithm {\sf CR} can be illustrated as Algorithm~\ref{Algorithm:Algorithm1} below. 

\begin{algorithm}[h]
\begin{algorithmic}[1]
\INPUT a directed graph $D=(V,A,S,f)$
\OUTPUT a $D$-canonical set of in-trees covering $A$, if one exists
\IF{$D$ is not $(S,f)$-proper}
\STATE Halt (there exists no $D$-canonical set of in-trees covering $A$)
\ENDIF
\STATE Find an optimal solution $B$ of $\mbox{RAA-RA}(D^{\ast})$
\IF{$|B| > {\sf opt}_D$}
\STATE Halt (there exists no $D$-canonical set of in-trees covering $A$)
\ELSE
\STATE Construct a $(D+B)$-canonical set $\mathcal{T}'$ of arc-disjoint in-trees
\STATE Construct a set $\mathcal{T}$ of in-trees from $\mathcal{T}'$ by using the procedure {\sf Replace}
\RETURN $\mathcal{T}$ 
\ENDIF
\end{algorithmic}
\caption{Algorithm {\sf CR}}
\label{Algorithm:Algorithm1}
\end{algorithm}

\begin{lemma} \label{lemma1:computation}
Given a directed graph $D=(V,A,f,S)$, Algorithm ${\sf CR}$ correctly 
finds a $D$-canonical set of in-trees which covers $A$ in $O(\gamma_1 + |V||A|+ M^4)$ time if one exists where 
$\gamma_1$ is the time required to solve $\mbox{RAA-RA}(D^{\ast})$ and 
$M=\sum_{v \in V}f(R(v))$. 
\end{lemma}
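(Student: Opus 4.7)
The plan is to establish correctness and running time of Algorithm~{\sf CR} separately, with both parts essentially assembled from results proved earlier in the paper. The correctness argument is a chain of equivalences. If $D$ fails to be $(S,f)$-proper, the observation immediately preceding Proposition~\ref{proposition1:raa-ra} shows $\mbox{CDGI}(D)$ is infeasible, so the early halt is correct. Otherwise the output $B$ of $\mbox{RAA-RA}(D^{\ast})$ is a minimum $D^{\ast}$-rooted connector, and Proposition~\ref{proposition1:raa-ra} guarantees $|B| \ge {\sf opt}_D$. By Lemma~\ref{lemma2:raa-ra}, feasibility of $\mbox{CDGI}(D)$ is equivalent to the existence of a $D^{\ast}$-rooted connector of size exactly ${\sf opt}_D$, which by minimality of $B$ amounts to $|B| = {\sf opt}_D$. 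Hence the second halt is also correct. In the remaining case, Corollary~\ref{corollary:KKT08} guarantees a $(D+B)$-canonical set $\mathcal{T}'$ of arc-disjoint in-trees exists, and the ``if-part'' of the proof of Lemma~\ref{lemma2:raa-ra} is precisely the argument that the procedure {\sf Replace} converts such a $\mathcal{T}'$ into a $D$-canonical set of in-trees covering $A$.

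For the running time, I would simply bound each step of Algorithm~\ref{Algorithm:Algorithm1}. Testing whether $D$ is $(S,f)$-proper requires computing $R_D(v)$ for every $v \in V$ and comparing $|\delta_{D^{\ast}}(v)|$ with $f(R_D(v))$; performing a reverse reachability search from each $s_i$ yields $R_D(\cdot)$ in $O(d(|V|+|A|))$ time, and since $D$ is connected $|V| = O(|A|)$, giving the $O(|V||A|)$ contribution. Line~4 costs $\gamma_1$ by definition. Line~8 invokes Theorem~\ref{theorem2:KKT08} on $D+B$: by (\ref{eq4:lemma1:raa-ra}) the arc count of $D+B$ is bounded by $\sum_{v} f(R_D(v)) \le M$, while (\ref{eq3:directed graphs}) gives $\sum_{v} f(R_{D+B}(v)) = M$, so the cost is $O(M^2 \cdot M^2) = O(M^4)$. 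Line~9 executes {\sf Replace}, which traverses each arc of each in-tree of $\mathcal{T}'$ a constant number of times; since the total arc count across all in-trees is $O(M)$, this is absorbed into the $O(M^4)$ term. Summing gives $O(\gamma_1 + |V||A| + M^4)$.

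The only mildly non-routine point — the ``hard part'' in a proof that is otherwise pure bookkeeping — is carefully switching between the three directed graphs $D$, $D^{\ast}$, and $D+B$ when invoking prior results, making sure that $\mbox{RAA-RA}$ is posed on $D^{\ast}$ while Theorem~\ref{theorem2:KKT08} is invoked on $D+B$, and that the parameters $M$, ${\sf opt}_D$ and $|A \cup B|$ are related via (\ref{eq1:directed graphs}), (\ref{eq1:rooted}) and (\ref{eq4:lemma1:raa-ra}) so that everything cleanly collapses into the claimed $O(M^4)$ bound. Beyond this there is no new combinatorial content, and the lemma follows by concatenating these observations.
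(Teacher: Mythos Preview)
Your proposal is correct and follows essentially the same approach as the paper's own proof: correctness is deferred to Lemma~\ref{lemma2:raa-ra} (you spell out the chain through Proposition~\ref{proposition1:raa-ra} and Corollary~\ref{corollary:KKT08} a bit more explicitly, but the content is identical), and the running-time analysis bounds each step of Algorithm~{\sf CR} in the same way, using $|A\cup B|\le M$ to get the $O(M^4)$ bound from Theorem~\ref{theorem2:KKT08}. There is no substantive difference.
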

\begin{proof}
The correctness of the algorithm follows from Lemma~\ref{lemma2:raa-ra}. Thus, 
we consider the time complexity. 
In Step~1, we have to compute $R_D(v)$ for every $v \in V$. 
This can be done in $O(|V||A|)$ time by applying depth-first search from every $s_i \in S$. 
After this, the time required to test whether $|\delta_{D^{\ast}}(v)| \le f(R_D(v))$ for all $v\in V$ 
is $O(|A|)$. Thus, the time required for Step~1 is $O(|V||A|)$. 
Since the number of arcs of $D+B$ is at most $M$ for a $D^{\ast}$-rooted connector $B$ with
$|B|={\sf opt}_D$ from (\ref{eq1:rooted}), the time required for Step~8 is $O(M^4)$
from Theorem~\ref{theorem2:KKT08}. 
Moreover, since the number of arcs of $D+B$ is at most $M$, the time required for Step~9
is $O(M)$ from the definition of Procedure {\sf Replace}.
Hence, since the time required for Step~4 is $\gamma_1$, the lemma follows. 
\end{proof}

\subsection{Reduction from $\mbox{RAA-RA}$ to $\mbox{WMI}$}

From the algorithm {\sf CR} in Section~\ref{Reduction from CDGI to RAA-RA},
in order 
to present an algorithm for $\mbox{CDGI}(D)$, what remains is to 
show how we solve $\mbox{RAA-RA}(D^{\ast})$. 
In this section, we will prove that we can test whether there exists 
a $D^{\ast}$-rooted connector whose size is equal to ${\sf opt}_D$ 
(i.e., Steps~4 and 5 in the algorithm {\sf CR}) 
by reducing it to the problem $\mbox{WMI}$. 
Our proof is based on the algorithm of \cite{F06} for 
$\mbox{RAA-RA}(D^{\ast})$ in which $S$ consists of a single vertex.
We extend the idea of \cite{F06} to the case of 
$|S|>1$ by using Theorem~\ref{theorem:KKT08}.  
We define a directed graph $D_+$ obtained from $D$ by adding ${\sf opt}_D$ parallel arcs to every 
$e \in A$. 
Then, we will compute a $D^{\ast}$-rooted connector whose size is equal to ${\sf opt}_D$ 
by using an algorithm for $\mbox{WMI}(D^{\ast}_+)$
as described below. 
Since the number of arcs in a $D^{\ast}$-rooted connector whose size is equal to ${\sf opt}_D$ 
which are parallel to one arc in $A$ 
is at most ${\sf opt}_D$, it is enough to add ${\sf opt}_D$ parallel arcs to 
each arc of $A$ in $D_+$ in order to find a $D^{\ast}$-rooted connector whose size 
is equal to ${\sf opt}_D$. 

We denote by $A_+$ and $A^{\ast}_+$ the arc sets of $D_+$ and $D^{\ast}_+$, respectively. 
If $I\subseteq A_+^{\ast}$ is a complete 
$D_+^{\ast}$-intersection, 
since $I$ is a base of $\bm{U}(D_+^{\ast})$ and 
from (\ref{eq1:matroids}) and (\ref{eq3:directed graphs}),  
\begin{equation} \label{eq1:proposition1:wmi}
|I| 
=\mbox{$\sum$}_{v \in V}f(R_{D_+}(v))
=\mbox{$\sum$}_{v \in V}f(R_D(v)). 
\end{equation}
We define a weight function $w \colon A_+^{\ast} \to \mathbb{R}_+$ by  
\begin{equation} \label{eq1:algorithm for raa-ra} 
w(e)= 
\left\{
\begin{array}{ll}
0, & \mbox{ if } e \in A^{\ast}, \\
1, & \mbox{ otherwise}. 
\end{array}
\right.
\end{equation}
The following lemma shows the relation between $\mbox{RAA-RA}(D^{\ast})$ and $\mbox{WMI}(D_+^{\ast})$. 
\begin{lemma} \label{lemma2:wmi}
Given an $(S,f)$-proper directed graph  $D=(V,A,S,f)$, 
there exists a $D^{\ast}$-rooted connector whose size is equal to ${\sf opt}_D$
if and only if 
there exists a complete $D_+^{\ast}$-intersection whose weight is equal to ${\sf opt}_D$.
\end{lemma}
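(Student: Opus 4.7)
The plan is to prove both directions by leveraging Corollary~\ref{corollary:KKT08}, which characterizes when an arc set $B$ parallel to $A$ is a $D^{\ast}$-rooted connector via the existence of a complete $(D+B)^{\ast}$-intersection. The bridge between the two problems is to view $B$ as sitting inside the added parallel arcs $A_+\setminus A$; both matroids $\bm{M}$ and $\bm{U}$ on the graphs $(D+B)^{\ast}$ and $D_+^{\ast}$ are then closely related because $V^i_{D+B}=V^i_{D_+}=V^i_D$ by (\ref{eq4:directed graphs}), and the weight function $w$ defined in (\ref{eq1:algorithm for raa-ra}) simply counts the number of arcs of an intersection lying in $A_+\setminus A$.

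For the only-if part, I would start with a $D^{\ast}$-rooted connector $B$ of size ${\sf opt}_D$. Since each arc of $A$ has ${\sf opt}_D$ parallel copies in $A_+$, we may identify $B$ with a subset of $A_+\setminus A$. By Corollary~\ref{corollary:KKT08}, there is a complete $(D+B)^{\ast}$-intersection $I$. I would then check that $I$ remains a base of both $\bm{M}(D_+^{\ast})$ and $\bm{U}(D_+^{\ast})$ — a direct verification using $V^i_{D+B}=V^i_{D_+}=V^i_D$ — and hence is a complete $D_+^{\ast}$-intersection. For the weight, on the one hand $w(I)=|I\setminus A^{\ast}|\le |B|={\sf opt}_D$ since $I\subseteq A^{\ast}\cup B$; on the other hand, $|I|=\sum_{v\in V}f(R_D(v))$ (because $I$ is a $\bm{U}$-base, together with (\ref{eq3:directed graphs})), so $w(I)=|I|-|I\cap A^{\ast}|\ge \sum_{v\in V}f(R_D(v))-|A^{\ast}|={\sf opt}_D$ by (\ref{eq1:directed graphs}) and (\ref{eq1:rooted}). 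The two bounds sandwich $w(I)$ to ${\sf opt}_D$.

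For the if-part, I would start with a complete $D_+^{\ast}$-intersection $I$ satisfying $w(I)={\sf opt}_D$, and set $B:=I\setminus A^{\ast}$. By (\ref{eq1:algorithm for raa-ra}) we have $|B|=w(I)={\sf opt}_D$; moreover $B\subseteq A_+\setminus A$, so $B$ is parallel to $A$ and $A\cap B=\emptyset$. Since $I\subseteq A^{\ast}\cup B$, which is precisely the arc set of $(D+B)^{\ast}$, and since the $\bm{M}$- and $\bm{U}$-base conditions transfer from $D_+^{\ast}$ to $(D+B)^{\ast}$ using the same identity $V^i_D=V^i_{D+B}=V^i_{D_+}$, the set $I$ is a complete $(D+B)^{\ast}$-intersection. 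Corollary~\ref{corollary:KKT08} then yields that $B$ is a $D^{\ast}$-rooted connector, necessarily of size ${\sf opt}_D$.

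The main obstacle I anticipate is not conceptual but bookkeeping: keeping careful track of which matroid each intersection lives in, and verifying that the base properties carry over unchanged between the ambient graphs $(D+B)^{\ast}$ and $D_+^{\ast}$. The weight computation in the only-if direction also deserves care, as the equality $w(I)={\sf opt}_D$ comes not from a direct combinatorial identification but from a sandwich argument that simultaneously invokes (\ref{eq1:directed graphs}), (\ref{eq1:rooted}), and (\ref{eq3:directed graphs}).
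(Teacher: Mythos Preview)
Your proposal is correct and follows essentially the same approach as the paper: both directions go through Corollary~\ref{corollary:KKT08}, identify $B$ with a subset of $A_+\setminus A$, and transfer the base properties of $I$ between $(D+B)^{\ast}$ and $D_+^{\ast}$ via $V^i_D=V^i_{D+B}=V^i_{D_+}$. The only cosmetic difference is that the paper packages the transfer of base conditions and the lower bound $w(I)\ge{\sf opt}_D$ into two separate auxiliary lemmas (Lemmas~\ref{lemma1:wmi} and~\ref{proposition1:wmi}), whereas you inline both arguments directly into the main proof; your computation of $|B|=w(I)={\sf opt}_D$ in the if-part is in fact slightly more direct than the paper's.
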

To prove Lemma~\ref{lemma2:wmi}, we need to show the following two lemmas.

\begin{lemma} \label{lemma1:wmi}
Given a directed graph  $D=(V,A,S,f)$ and an arc set $B$ which is parallel to $A$,
\begin{enumerate}
\item if there is a complete $D^{\ast}$-intersection $I$, $I$ is also a complete $(D+B)^{\ast}$-intersection, and
\item if there is a complete $(D+B)^{\ast}$-intersection $I$ such that $I\subseteq A^{\ast}$, $I$ 
is also a complete $D^{\ast}$-intersection.  
\end{enumerate}
\end{lemma}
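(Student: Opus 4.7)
The plan is to reduce both parts to the two key identities $(D+B)^{\ast}=D^{\ast}+B$ from (\ref{eq2:directed graphs}), $V^i_D=V^i_{D+B}$ from (\ref{eq4:directed graphs}), and $R_D(v)=R_{D+B}(v)$ from (\ref{eq3:directed graphs}), observing that all of the defining data of $\bm{M}(D^{\ast})$ and $\bm{U}(D^{\ast})$ that actually get used (the vertex sets $V^i_D\cup\{s^{\ast}\}$, the quantities $f(R_D(v))$, and the incidence sets $\delta_{D^{\ast}}(v)$ restricted to $A^{\ast}$) are invariant under the operation of adding $B$.

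For part~1, note first that any complete $D^{\ast}$-intersection $I$ is automatically a subset of $A^{\ast}\subseteq A^{\ast}\cup B$, so it lives in the ground set of the matroids for $(D+B)^{\ast}$ as well. Writing a partition $\{I_{i,j}\}$ of $I$ witnessing that $I$ is a base of $\bm{M}(D^{\ast})$, each $(V^i_D\cup\{s^{\ast}\},I_{i,j})$ is a spanning tree; by (\ref{eq4:directed graphs}) this is precisely the spanning-tree condition for $\bm{M}_{i,j}((D+B)^{\ast})$, so the same partition witnesses that $I$ is a base of $\bm{M}((D+B)^{\ast})$. For the $\bm{U}$-condition, using $I\subseteq A^{\ast}$ we get $\delta_{(D+B)^{\ast}}(v)\cap I=\delta_{D^{\ast}+B}(v)\cap I=\delta_{D^{\ast}}(v)\cap I$, and by (\ref{eq3:directed graphs}) $f(R_{D+B}(v))=f(R_D(v))$, so the base equation of $\bm{U}(D^{\ast})$ transfers verbatim to a base equation of $\bm{U}((D+B)^{\ast})$.

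For part~2, the argument is simply the reverse: given a complete $(D+B)^{\ast}$-intersection $I$ with the extra hypothesis $I\subseteq A^{\ast}$, the witnessing tree partition for $\bm{M}((D+B)^{\ast})$ spans each $V^i_{D+B}\cup\{s^{\ast}\}=V^i_D\cup\{s^{\ast}\}$, hence witnesses that $I$ is a base of $\bm{M}(D^{\ast})$; and the base equality in $\bm{U}((D+B)^{\ast})$ rewrites, via the same two identities used above, as the base equality in $\bm{U}(D^{\ast})$. The hypothesis $I\subseteq A^{\ast}$ is genuinely needed here, because otherwise $I$ would contain arcs outside $A^{\ast}$ and would not even belong to the ground set of $\bm{M}(D^{\ast})$ or $\bm{U}(D^{\ast})$; this is the only place where the asymmetry between the two directions appears.

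There is no real obstacle here, the lemma is essentially a bookkeeping statement saying that the two matroids depend on $D$ only through data that is invariant under $D\mapsto D+B$ whenever $B$ is parallel to $A$. The only point worth being careful about is keeping track of the ground sets: the matroids $\bm{M}(D^{\ast})$ and $\bm{M}((D+B)^{\ast})$ are formally defined on different ground sets $A^{\ast}$ and $A^{\ast}\cup B$, so we have to check that every set under discussion is contained in $A^{\ast}$, which is precisely why the extra assumption appears in part~2.
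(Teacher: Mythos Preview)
Your proposal is correct and follows essentially the same approach as the paper: both argue that the tree partition witnessing the base property of $\bm{M}$ transfers via $V^i_D=V^i_{D+B}$, and that the base equalities for $\bm{U}$ transfer via $R_D(v)=R_{D+B}(v)$ together with $\delta_{D^{\ast}}(v)\cap I=\delta_{(D+B)^{\ast}}(v)\cap I$ (using $I\subseteq A^{\ast}$). The paper's proof of part~2 is simply ``in the same manner as part~1,'' which is exactly what you spell out; your additional remarks about ground sets and the role of the hypothesis $I\subseteq A^{\ast}$ are correct and slightly more explicit than the paper.
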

\begin{proof}
{\bf 1$\colon$}
We first prove that $I$ is a base of $\bm{M}((D+B)^{\ast})$. 
Since $I$ is a base of $\bm{M}(D^{\ast})$, $I$ can be partitioned into $\{I_{i,1},\ldots,I_{i,f(s_i)}\colon i=1,\ldots,d\}$
such that a directed graph $(V_D^i\cup \{s^{\ast}\},I_{i,j})$ is a tree for every 
$i=1,\ldots,d$ and $j=1,\ldots,f(s_i)$.
Thus, since each $(V^i_{D+B}\cup \{s^{\ast}\},I_{i,j})$ is a tree from (\ref{eq4:directed graphs}),
$I$ is a base of $\bm{M}((D+B)^{\ast})$. 

Next we prove that $I$ is a base of $\bm{U}((D+B)^{\ast})$. 
Since $I$ is a base of $\bm{U}(D^{\ast})$,
$|\delta_{D^{\ast}}(v)\cap I|$ is equal to 
\begin{equation*}
\left\{
\begin{array}{ll}
f(R_D(v)), & \mbox{ if } v \in V,\\
0, & \mbox{ if } v=s^{\ast}.
\end{array}
\right.
\end{equation*}
Furthermore, since $I\cap B =\emptyset$ follows from $I\subseteq A^{\ast}$, 
$|\delta_{D^{\ast}}(v)\cap I|$ is equal to $|\delta_{(D+B)^{\ast}}(v)\cap I|$ for every $v \in V$. 
Thus, for each $v\in V$, $|\delta_{(D+B)^{\ast}}(v)\cap I|$ is equal to 
\begin{equation}
\left\{
\begin{array}{ll}
f(R_D(v)) = f(R_{D+B}(v)), & \mbox{ if } v \in V,\\
0, & \mbox{ if } v = s^{\ast}.
\end{array}
\right.
\end{equation}
This proves that $I$ is a base of $\bm{U}((D+B)^{\ast})$.  
\\
{\bf 2$\colon$}
This part can be proved in the same manner as in the proof of the part~1.  
\hfill\usebox{\ProofSym}
\end{proof}

\begin{lemma} \label{proposition1:wmi}
Given $D_+^{\ast}$ of an $(S,f)$-proper directed graph 
$D=(V,A,S,f)$ and a weight function $w\colon A_+^{\ast} \to \mathbb{R}_+$ 
defined by (\ref{eq1:algorithm for raa-ra}),
if there exists a complete $D_+^{\ast}$-intersection $I\subseteq A^{\ast}_+$, $w(I) \ge {\sf opt}_D$. 
Moreover, $w(I)={\sf opt}_D$ if and only if $A^{\ast}\subseteq I$. 
\end{lemma}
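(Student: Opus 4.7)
The plan is to prove both statements by a direct computation of $w(I)$ based on how the weight function (\ref{eq1:algorithm for raa-ra}) splits the ground set $A_+^\ast$ into $A^\ast$ (weight $0$) and $A_+^\ast \setminus A^\ast$ (weight $1$). In particular, I would write
\[
w(I) \;=\; |I \setminus A^\ast| \;=\; |I| - |I \cap A^\ast|,
\]
so that both claims reduce to controlling $|I|$ and $|I \cap A^\ast|$ individually.

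For the first quantity, I would invoke (\ref{eq1:proposition1:wmi}): since $I$ is a complete $D_+^\ast$-intersection, it is a base of $\bm{U}(D_+^\ast)$, and because $D_+$ is obtained from $D$ by adding arcs parallel to those of $A$, equation (\ref{eq3:directed graphs}) gives $R_{D_+}(v)=R_D(v)$ for every $v\in V$; summing the equality case of (\ref{eq1:matroids}) over all $v\in V$ then yields $|I|=\sum_{v\in V}f(R_D(v))$. For the second quantity, I would simply use the trivial bound $|I\cap A^\ast|\le |A^\ast|$, and recall from (\ref{eq1:directed graphs}) that $|A^\ast|=|A|+f(S)$.

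Combining these two facts gives
\[
w(I) \;\ge\; \mbox{$\sum$}_{v\in V}f(R_D(v)) - \bigl(|A|+f(S)\bigr) \;=\; {\sf opt}_D,
\]
by the definition (\ref{eq1:rooted}) of ${\sf opt}_D$, which is the first assertion. For the equivalence in the second assertion, I would observe that equality in this chain holds exactly when $|I\cap A^\ast|=|A^\ast|$; since $I\cap A^\ast\subseteq A^\ast$ and both sets are finite, this is equivalent to the set-theoretic containment $A^\ast\subseteq I$.

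Since every step is either a direct substitution of definitions or an application of an equation already established in the preliminaries, there is no real obstacle here; the proof is essentially a bookkeeping argument, and the main care needed is to correctly invoke (\ref{eq1:proposition1:wmi}) via the observation that $\bm{U}(D_+^\ast)$ uses the capacities $f(R_{D_+}(v))$, which coincide with $f(R_D(v))$ by (\ref{eq3:directed graphs}).
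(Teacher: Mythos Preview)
Your proposal is correct and follows essentially the same argument as the paper: write $w(I)=|I|-|I\cap A^{\ast}|$, use (\ref{eq1:proposition1:wmi}) for $|I|$ and (\ref{eq1:directed graphs}) for $|A^{\ast}|$, bound $|I\cap A^{\ast}|\le |A^{\ast}|$, and conclude via (\ref{eq1:rooted}), with equality exactly when $A^{\ast}\subseteq I$. The only difference is that you spell out the role of (\ref{eq3:directed graphs}) behind (\ref{eq1:proposition1:wmi}) a bit more explicitly than the paper does.
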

\begin{proof}
From (\ref{eq1:algorithm for raa-ra}), we have 
$w(I)=|I| - |I\cap A^{\ast}|$.  
Furthermore, 
\begin{equation*}
|I|-|I\cap A^{\ast}| \ge |I| -|A^{\ast}|
\underbrace{ 
=\mbox{$\sum$}_{v \in V}f(R_D(v))-(|A|+f(S))
}_{\mbox{\small from (\ref{eq1:directed graphs}) and (\ref{eq1:proposition1:wmi})}}.
\end{equation*} 
Thus, $w(I)\ge {\sf opt}_D$ follows from (\ref{eq1:rooted}). 
From the above equation, $w(I)={\sf opt}_D$ if and only if $|I\cap A^{\ast}|=|A^{\ast}|$. 
This proves the rest of the lemma. 
\hfill\usebox{\ProofSym}
\end{proof}

\begin{proof2}{Lemma~\ref{lemma2:wmi}}
{\bf Only if-part$\colon$}
Assume that there exists a $D^{\ast}$-rooted connector whose size is equal to ${\sf opt}_D$.
Since $D_+$ has ${\sf opt}_D$ parallel arcs to every $e \in A$, there exists 
a $D^{\ast}$-rooted connector $B\subseteq A_+\setminus A$ with $|B|={\sf opt}_D$. 
Let us fix a $D^{\ast}$-rooted connector $B\subseteq A_+\setminus A$ with $|B|={\sf opt}_D$. 
From (i) of Lemma~\ref{lemma1:wmi}, in order to prove the ``only if-part'', it is sufficient to prove that there exists 
a complete $(D+B)^{\ast}$-intersection $I$ with $w(I)={\sf opt}_D$. 
Since there exists a complete $(D+B)^{\ast}$-intersection $I$ from Corollary~\ref{corollary:KKT08}, 
we will prove that $w(I)={\sf opt}_D$. 
Since the arc set of $(D+B)^{\ast}$ is equal to $A^{\ast}\cup B$ and 
$I$ is a $(D+B)^{\ast}$-intersection, $I \subseteq A^{\ast} \cup B$ holds. 
Thus, since $w(A^{\ast} \cup B)=|B|={\sf opt}_D$ follows from (\ref{eq1:algorithm for raa-ra}), 
$w(I)\le w(A^{\ast} \cup B)={\sf opt}_D$ holds. 
Hence, $w(I)={\sf opt}_D$ follows from Lemma~\ref{proposition1:wmi}. This completes the proof. 
\\
{\bf If-part$\colon$}
Assume that there exists a complete $D_+^{\ast}$-intersection 
$I$ with $w(I)={\sf opt}_D$. 
Let $B$ be $I \setminus A^{\ast}$, and we will 
prove that $B$ is a $D^{\ast}$-rooted connector with 
$|B| = {\sf opt}_D$. 
We first prove $B$ is a $D^{\ast}$-rooted connector by using (ii) of 
Lemma~\ref{lemma1:wmi} and Corollary~\ref{corollary:KKT08}. 
We set $B$ and $D$ in Lemma~\ref{lemma1:wmi} to be 
$A_+\setminus (A\cup B)$ and $D+B$, respectively.
Notice that $(D+B)+(A_+\setminus (A\cup B))=D_+$ follows from $B \subseteq A_+$
and $A_+ \setminus (A\cup B)$ is parallel to $A\cup B$.
From $B=I\setminus A^{\ast}$, we have $I \subseteq A^{\ast}\cup B$. 
Thus, $I$ is a complete $(D+B)^{\ast}$-intersection
since $I$ is a complete $D_+^{\ast}$-intersection and from (ii) 
of Lemma~\ref{lemma1:wmi}.  
Hence, from Corollary~\ref{corollary:KKT08}, 
$B$ is a $D^{\ast}$-rooted connector. 

What remains is to prove that $|B|={\sf opt}_D$. 
From Lemma~\ref{proposition1:wmi} and $w(I)={\sf opt}_D$,
$A^{\ast}\subseteq I$ holds. 
Thus, from $B=I\setminus A^{\ast}$ and (\ref{eq1:proposition1:wmi}), 
\begin{equation*}
|B|=|I\setminus A^{\ast}|= |I| - |A^{\ast}| = \mbox{$\sum$}_{v \in V}f(R_D(v))- (|A|+f(S)).
\end{equation*} 
This equation and (\ref{eq1:rooted}) complete the proof. 
\end{proof2}

As seen in the proof of the ``if-part'' of Lemma~\ref{lemma2:wmi}, 
if we can find a complete $D_+^{\ast}$-intersection $I$ 
with $w(I) = {\sf opt}_D$, we can find 
a $D^{\ast}$-rooted connector $B$ with $|B|={\sf opt}_D$ 
by setting $B=I\setminus A^{\ast}$. 
Furthermore, we can obtain a complete $D_+^{\ast}$-intersection 
whose weight is equal to ${\sf opt}_D$ if one exists by using the algorithm for $\mbox{WMI}(D_+^{\ast})$
since the optimal value of $\mbox{WMI}(D_+^{\ast})$ is at least ${\sf opt}_D$ 
from Lemma~\ref{proposition1:wmi}. The formal description of the algorithm 
called Algorithm {\sf RW}
for finding a $D^{\ast}$-rooted connector whose size is equal to ${\sf opt}_D$
is illustrated in Algorithm~\ref{Algorithm:Algorithm2}.

\begin{algorithm}[h]
\begin{algorithmic}[1]
\INPUT $D^{\ast}$ of an $(S,f)$-proper directed graph  $D=(V,A,S,f)$
\OUTPUT a $D^{\ast}$-rooted connector whose size is equal to ${\sf opt}_D$, if one exits
\STATE Find an optimal solution $I$ for $\mbox{WMI}(D_+^{\ast})$ 
with a weight function $w$ defined by (\ref{eq1:algorithm for raa-ra})\\
\IF{there exists no solution of $\mbox{WMI}(D_+^{\ast})$ or $w(I)>{\sf opt}_D$}
\STATE Halt (There exists no $D^{\ast}$-rooted connector whose size is equal to ${\sf opt}_D$)
\ENDIF 
\RETURN $I\setminus A^{\ast}$
\end{algorithmic}
\caption{Algorithm {\sf RW}}
\label{Algorithm:Algorithm2}
\end{algorithm}

\begin{lemma} \label{lemma2:computation}
Given $D^{\ast}$ of an $(S,f)$-proper directed graph $D=(V,A,f,S)$, Algorithm ${\sf RW}$ correctly 
finds a $D^{\ast}$-rooted connector whose size is equal to ${\sf opt}_D$
in $O(\gamma_2 + M|A|)$ time if one exists where 
$\gamma_2$ is the time required to solve $\mbox{WMI}(D_+^{\ast})$ and 
$M=\sum_{v \in V}f(R_D(v))$.
\end{lemma}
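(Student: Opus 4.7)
The plan has two pieces: correctness from Lemma~\ref{lemma2:wmi}, and time accounting around the WMI black box.

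For correctness, I would appeal directly to Lemma~\ref{lemma2:wmi}: a $D^{\ast}$-rooted connector of size ${\sf opt}_D$ exists if and only if $\mbox{WMI}(D_+^{\ast})$ admits a complete intersection of weight ${\sf opt}_D$. Hence if the call in Step~1 returns no solution, or returns $I$ with $w(I)>{\sf opt}_D$, Step~3 correctly reports non-existence. Otherwise Lemma~\ref{proposition1:wmi} gives $w(I)\ge {\sf opt}_D$, so $w(I)={\sf opt}_D$, and the same lemma forces $A^{\ast}\subseteq I$; the ``if-part'' of Lemma~\ref{lemma2:wmi} (applied with $B:=I\setminus A^{\ast}$) then guarantees that the set returned in Step~5 is a $D^{\ast}$-rooted connector of size exactly ${\sf opt}_D$.

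For the running time, one first needs the reachability sets $R_D(v)$ for every $v\in V$ in order to compute ${\sf opt}_D$ and to feed the oracles for $\bm{M}(D_+^{\ast})$ and $\bm{U}(D_+^{\ast})$. These can be obtained by a reverse depth-first search from each $s_i\in S$ in $O(|V||A|)$ time; since ${\sf opt}_D\ge 0$ together with (\ref{eq1:rooted}) and (\ref{eq1:directed graphs}) yields $M\ge |A|+f(S)\ge |V|-1$ for the connected graph $D$, this is $O(M|A|)$. Next, constructing $D_+^{\ast}$ consists of adding ${\sf opt}_D$ parallel copies of each arc in $A$ plus $f(S)$ arcs into $s^{\ast}$; because ${\sf opt}_D\le M$ by (\ref{eq1:rooted}), both $|A_+^{\ast}|$ and the construction cost are $O(M|A|)$. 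Step~1 then costs $\gamma_2$ by assumption, and evaluating $w(I)$, comparing it with ${\sf opt}_D$, and forming $I\setminus A^{\ast}$ in Step~5 are all linear in $|A_+^{\ast}|=O(M|A|)$. Summing gives the claimed $O(\gamma_2+M|A|)$ bound.

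The only obstacle is a bookkeeping one: certifying that every quantity outside the WMI call fits in $O(M|A|)$. This reduces to the two inequalities ${\sf opt}_D\le M$ and $|V|-1\le M$, both immediate from (\ref{eq1:rooted}); once these are recorded, construction of the auxiliary graph, oracle setup, and extraction of $I\setminus A^{\ast}$ are all absorbed, and no step unexpectedly dominates.
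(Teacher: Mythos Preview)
Your proposal is correct and follows essentially the same approach as the paper: correctness via Lemma~\ref{lemma2:wmi}, and the $O(M|A|)$ bound on constructing $D_+^{\ast}$ via ${\sf opt}_D\le M$ from (\ref{eq1:rooted}). You give more detail than the paper does---in particular you explicitly account for computing the sets $R_D(v)$ and bound this by $O(M|A|)$ using $M\ge |A|+f(S)\ge |V|-1$, whereas the paper's proof omits this step (these sets are already computed in Step~1 of Algorithm {\sf CR} before {\sf RW} is invoked)---but the argument is the same.
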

\begin{proof}
The correctness of the algorithm follows from Lemma~\ref{lemma2:wmi}. We consider the time 
complexity. In Step~1, we can construct $D^{\ast}_+$ in $O(M|A|)$ time 
since $D_+^{\ast}$ has ${\sf opt}_D$ arcs 
parallel to each arc in $A$ and from (\ref{eq1:rooted}). 
Hence, since the time required for Step~2 is equal to $\gamma_2$,
the lemma holds.  
\end{proof}

\subsection{Algorithm for $\mbox{CDGI}$}

We are ready to explain the formal description of our algorithm 
called Algorithm {\sf Covering} for $\mbox{CDGI}(D)$.
Algorithm {\sf Covering} is the same as Algorithm {\sf CR} such that 
Steps~4, 5 and 6 are replaced by Algorithm {\sf RW}.

\begin{theorem}
Given a directed graph  $D=(V,A,S,f)$, Algorithm {\sf Covering} correctly finds 
a $D$-canonical set of in-trees which covers $A$ in $O(M^7|A|^6)$ time 
if one exits where $M=\sum_{v\in V}f(R_D(v))$.
\end{theorem}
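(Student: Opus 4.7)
The plan is to decompose the proof into a correctness part and a running-time part, both obtained by chaining the earlier lemmas. Algorithm {\sf Covering} is just Algorithm {\sf CR} with Algorithm {\sf RW} inserted as the subroutine solving $\mbox{RAA-RA}(D^{\ast})$ in Steps~4--6, and Algorithm {\sf RW} in turn uses an algorithm for $\mbox{WMI}(D_+^{\ast})$ as a subroutine. For correctness I would simply compose Lemma~\ref{lemma1:computation} (which certifies that {\sf CR} correctly solves $\mbox{CDGI}(D)$ provided its $\mbox{RAA-RA}$ oracle is correct) with Lemma~\ref{lemma2:computation} (which certifies that {\sf RW} correctly solves $\mbox{RAA-RA}(D^{\ast})$ provided its $\mbox{WMI}$ oracle is correct); the latter rests on Lemma~\ref{lemma:matroid} for $\mbox{WMI}(D_+^{\ast})$. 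No additional argument is needed on that front.

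For the running time, I would write $\gamma_1$ for the time spent on $\mbox{RAA-RA}(D^{\ast})$ and $\gamma_2$ for the time spent on $\mbox{WMI}(D_+^{\ast})$. Lemmas~\ref{lemma1:computation} and~\ref{lemma2:computation} then give $T_{\sf Covering}=O(\gamma_1+|V||A|+M^4)$ and $\gamma_1=O(\gamma_2+M|A|)$, so everything reduces to estimating $\gamma_2$ via Lemma~\ref{lemma:matroid}.

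The key step, and essentially the only place where an inequality needs to be proved, is the bound $|A_+^{\ast}|=O(M|A|)$. By construction, $D_+$ is obtained from $D$ by attaching ${\sf opt}_D$ parallel copies to every arc of $A$, so $|A_+|=|A|(1+{\sf opt}_D)$ and hence $|A_+^{\ast}|=|A_+|+f(S)$. From~(\ref{eq1:rooted}) one immediately reads off ${\sf opt}_D\le M$, while $f(S)\le M$ because each $s_i$ lies in $R_D(s_i)$; these two observations together yield $|A_+^{\ast}|=O(M|A|)$. Moreover, the parameter also called $M$ in Lemma~\ref{lemma:matroid} is the size of a complete intersection, which by~(\ref{eq3:directed graphs}) equals $\sum_{v\in V}f(R_{D_+}(v))=\sum_{v\in V}f(R_D(v))=M$ when the lemma is applied to $D_+^{\ast}$. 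Plugging these in gives $\gamma_2=O\bigl(M\cdot(M|A|)^6\bigr)=O(M^7|A|^6)$, which dominates the $M|A|$, $|V||A|$, and $M^4$ contributions and yields the claimed $O(M^7|A|^6)$ overall bound.

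The only real obstacle is this size estimate $|A_+^{\ast}|=O(M|A|)$; once ${\sf opt}_D\le M$ is recognised, everything else is a direct substitution of previously established lemmas.
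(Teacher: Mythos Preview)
Your proposal is correct and follows essentially the same route as the paper: correctness is chained from Lemmas~\ref{lemma1:computation} and~\ref{lemma2:computation}, and the running time is governed by $\mbox{WMI}(D_+^{\ast})$ via Lemma~\ref{lemma:matroid} once one has $|A_+^{\ast}|=O(M|A|)$. The only cosmetic difference is in how that size bound is derived---the paper uses the $(S,f)$-proper inequality to get $|A^{\ast}|\le M$ and then adds ${\sf opt}_D\cdot|A|$, whereas you split off $f(S)\le M$ directly---and you are in fact a bit more careful than the paper in noting that the ``$M$'' parameter of Lemma~\ref{lemma:matroid} applied to $D_+^{\ast}$ coincides with $M$ by~(\ref{eq3:directed graphs}).
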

\begin{proof}
The correctness of the algorithm follows from Lemmas~\ref{lemma1:computation} and 
\ref{lemma2:computation}. We then consider the time complexity of this algorithm.
From Lemmas~\ref{lemma1:computation} and \ref{lemma2:computation}, 
what remains is to analyze the time required to solve $\mbox{WMI}(D_+^{\ast})$. 
If $D$ is $(S,f)$-proper,  
$|A^{\ast}|
=\mbox{$\sum$}_{v \in V}|\delta_{D^{\ast}}(v)| 
\le \mbox{$\sum$}_{v\in V}f(R_D(v))=M$.  
Thus, since $D_+^{\ast}$ has ${\sf opt}_D$ parallel arcs of every $e \in A$,  
$|A_+^{\ast}| = |A^{\ast}| + \mbox{$\sum$}_{e \in A}{\sf opt}_D\le M + M|A|$.
Hence we have $|A_+^{\ast}|=O(M|A|)$. 
Thus, from Lemma~\ref{lemma:matroid}, we can solve $\mbox{WMI}(D^{\ast})$ in 
$O(M^7|A|^6)$ time. 
From this discussion and Lemmas~\ref{lemma1:computation} 
and \ref{lemma2:computation}, we obtain the theorem.  
\end{proof}

\section{Acyclic Case}
\label{Acyclic Case}

In this section, we show that in the case where $D=(V,A,S,f)$ is acyclic, 
a $D$-canonical set of in-trees covering $A$ 
can be computed more efficiently than the general case. 
For this, we prove the following theorem. 

\begin{theorem} \label{main_theorem_acyclic}
Given an acyclic directed graph  $D=(V,A,S,f)$, 
there exists a $D$-canonical set of in-trees which covers $A$
 if and only if 
\begin{equation} \label{eq1:main_theorem_acyclic}
|B| \le f(R_D(\partial^+(B))) \mbox{ for every } v \in V  \mbox{ and
 } B\subseteq \delta_D(v). 
\end{equation}
\end{theorem}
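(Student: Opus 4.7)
The plan is to prove the two directions separately, with a local Hall-type matching argument driving the sufficiency. The necessity does not use acyclicity, whereas sufficiency relies on acyclicity both to make Hall's condition line up with (\ref{eq1:main_theorem_acyclic}) and to glue the local choices into globally valid in-trees.

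For the only-if direction, let $\mathcal{T}$ be a $D$-canonical set of in-trees covering $A$, and fix $v \in V$ together with $B \subseteq \delta_D(v)$. Every arc of $B$ lies in at least one in-tree of $\mathcal{T}$, and each $T_{i,j}$ contains at most one arc of $\delta_D(v)$ (in an in-tree, every non-root vertex has exactly one outgoing arc), so $|B| \le |\{(i,j) : T_{i,j} \cap B \neq \emptyset\}|$. On the other hand, if $T_{i,j}$ contains some $(v,u) \in B$ then $u \in V^i_D$, whence $s_i \in R_D(u) \subseteq R_D(\partial^+(B))$; the number of pairs $(i,j)$ with $s_i \in R_D(\partial^+(B))$ is exactly $f(R_D(\partial^+(B)))$, yielding (\ref{eq1:main_theorem_acyclic}).

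For the if direction, the strategy is to choose, for each vertex $v$, the outgoing arc used at $v$ by each in-tree passing through $v$ as a non-root, and then to assemble these local choices into in-trees. Concretely, for each $v \in V$ I would form the bipartite graph $H_v$ with vertex classes $\delta_D(v)$ and $Y_v = \{(i,j) : s_i \in R_D(v) \setminus \{v\},\ 1 \le j \le f(s_i)\}$, and an edge between $e = (v,u)$ and $(i,j)$ iff $s_i \in R_D(u)$. Acyclicity of $D$ ensures that $v \notin R_D(\partial^+(B))$ for every $B \subseteq \delta_D(v)$, so $|N_{H_v}(B)| = f(R_D(\partial^+(B)))$, and hence (\ref{eq1:main_theorem_acyclic}) is exactly Hall's condition for a matching of $H_v$ saturating $\delta_D(v)$. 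Extending such a matching to a surjection $\phi_v : Y_v \to \delta_D(v)$ (each $(i,j) \in Y_v$ has some adjacent arc because $s_i \in R_D(v) \setminus \{v\}$ forces an outgoing arc at $v$ lying on some $v$-to-$s_i$ path) specifies, for each in-tree to be built, its unique outgoing arc at $v$.

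Finally, I would define each $T_{i,j}$ to consist of the arcs $\phi_v(i,j)$ as $v$ ranges over $V^i_D \setminus \{s_i\}$. Every vertex in $V^i_D \setminus \{s_i\}$ then has exactly one outgoing arc in $T_{i,j}$, going to a vertex of $V^i_D$ (by construction $s_i \in R_D(\partial^+(\phi_v(i,j)))$), so $T_{i,j}$ has vertex set $V^i_D$; because $D$ is acyclic, repeatedly following outgoing arcs from any vertex terminates, and the only candidate terminus is $s_i$, so $T_{i,j}$ is a $(D,s_i)$-in-tree. Surjectivity of the maps $\phi_v$ then gives coverage of $A$. The main obstacle is precisely this gluing step: acyclicity is essential both to make $R_D(\partial^+(B))$ disjoint from $\{v\}$ (so that the local Hall condition aligns exactly with (\ref{eq1:main_theorem_acyclic})) and to ensure that the locally chosen outgoing arcs cannot form a cycle; without it, the reduction to independent local Hall problems would break down.
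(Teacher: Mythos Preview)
Your proof is correct and follows essentially the same approach as the paper: both directions hinge on the local bipartite graphs $H_v$ (the paper's $G_v$) and Hall's condition, with acyclicity guaranteeing that the locally chosen outgoing arcs assemble into genuine in-trees. The only cosmetic difference is that the paper builds the in-trees by induction along a topological order while you define them all at once via the maps $\phi_v$; the underlying idea is identical.
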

\begin{proof}
For each $v \in V$, we define an undirected bipartite graph
$G_v=(X_v\cup Y_v, E_v)$ which is necessary to prove the theorem. 
Let $X_v=\{x_e\colon e \in \delta_D(v)\}$ and 
$Y_v=\{y_{i,j}\colon s_i \in R_D(v), j=1,\ldots,f(s_i)\}$. 
$x_e \in X_v$ and $y_{i,j} \in Y_v$ are connected by an edge in $E_v$ if and only if $s_i$
is reachable from $\partial^+(e)$ (see Figure~\ref{fig:acyclic}). 

\begin{figure}[h]
\begin{minipage}{0.5\hsize}
\begin{center}
\includegraphics[width=4cm]{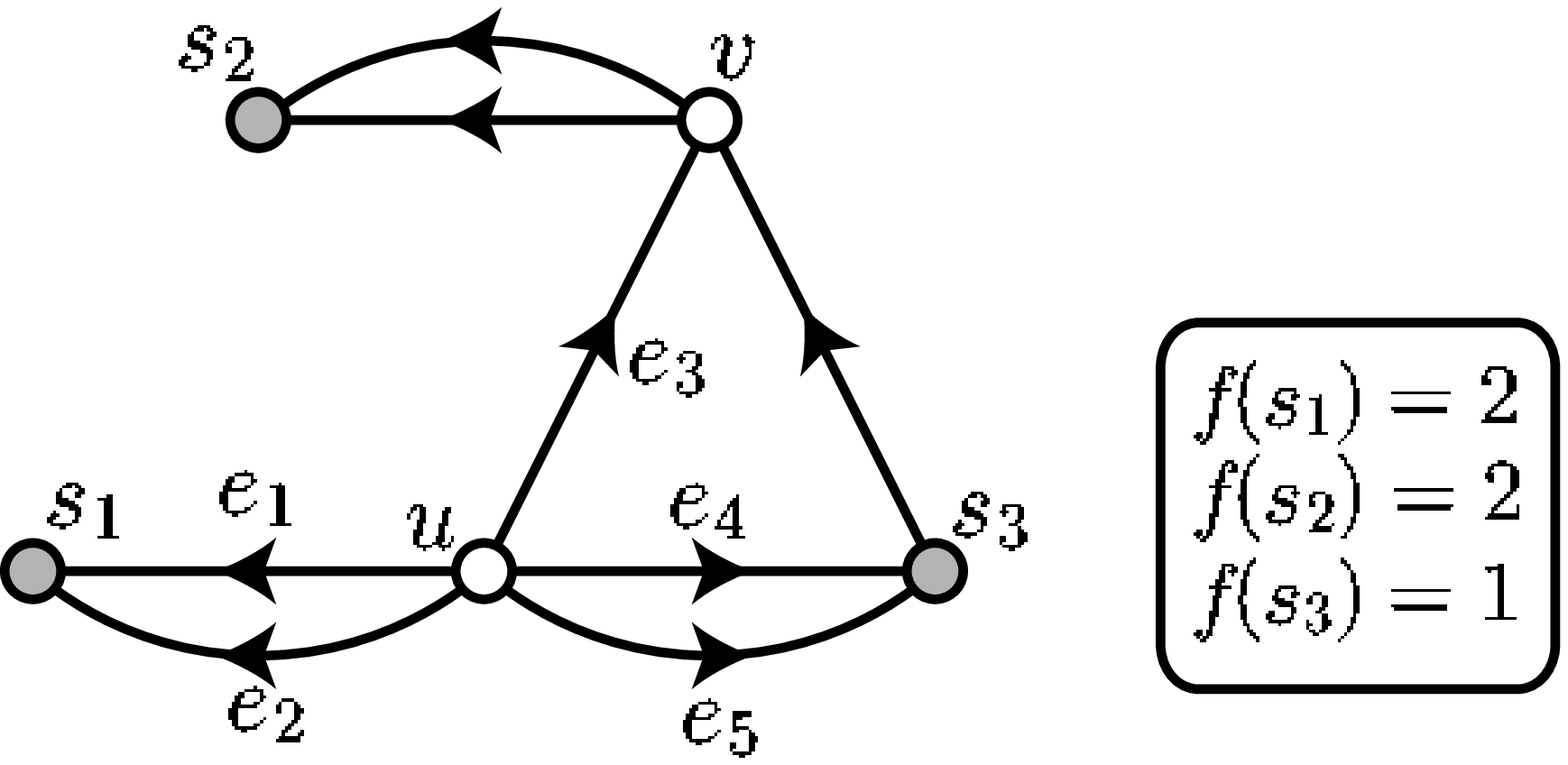}
\par(a)
\end{center}
\end{minipage}
\begin{minipage}{0.5\hsize}
\begin{center}
\includegraphics[width=3cm]{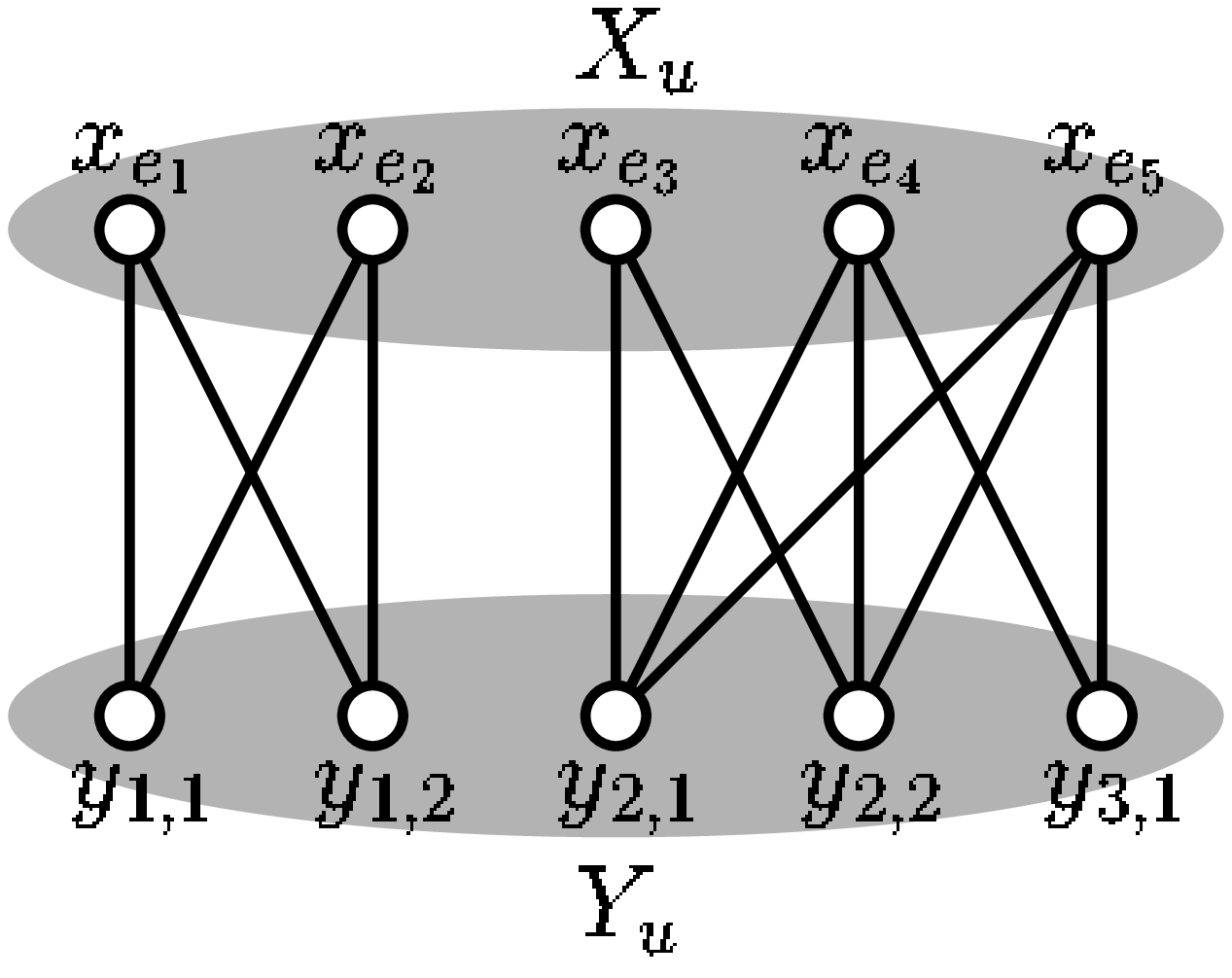}
\par(b)
\end{center}
\end{minipage}
\caption{\small (a) Input acyclic directed graph $D$. (b) Bipartite graph $G_u$
 for $u$ in (a).}
\label{fig:acyclic}
\end{figure}

It is well-known that (\ref{eq1:main_theorem_acyclic}) is equivalent to the necessary and
 sufficient condition that for any $v \in V$, there exists a matching
 in $G_v$ which saturates vertices in $X_v$ (e.g., Theorem~16.7 in Chapter~16 of \cite{S03}). 
Thus it is
 sufficient to prove that there exists a $D$-canonical set of in-trees which
 covers $A$ if and only if for any $v \in V$, there exists a matching
 in $G_v$ which saturates vertices in $X_v$. \\
{\bf If-part$\colon$}
Since $D$ has no cycle, we can label vertices in $V$ as follows, based on 
topological ordering$\colon$ 
(i) A label of each vertex is an integer between $1$ and $|V|$.
(ii) For any $e\in A$, a label of $\partial^+(e)$ is smaller than that of $\partial^-(e)$.  
For $W\subseteq V$, we denote by $D[W]$ a subgraph of 
$D=(V,A,S,f)$ induced by $W$ with a set of 
specified vertices $S\cap W$ and a restriction of $f$ on $S\cap W$. 
Let $V_t$ be the set of all vertices whose label is at most
$t$. We prove by induction on $t$. For $t=1$, 
it is clear that there exists a $D[V_1]$-canonical 
set of in-trees covering the arc set of $D[V_1]$. 
Assume that in the case of $t\ge 1$, there exists a $D[V_t]$-canonical set $\mathcal{T}$ 
of in-trees covering the arc set of $D[V_t]$. 
For $s_i \in S\cap V_t$ and $j=1,\ldots,f(s_i)$, 
let $T_{i,j}$ be an in-tree of $\mathcal{T}$ 
which is rooted at $s_i$ and spans vertices in $V_t$ from which $s_i$ is reachable. 

Let $v$ be a vertex whose label is equal to $t+1$.\\ 
{\bf Case1$\colon$}
We first consider the case of $v \notin S$. 
In this case, from $S\cap V_t=S\cap V_{t+1}$, 
we will construct a set $\mathcal{T}'$ of in-trees which consists of $T'_{i,1},\ldots,T_{i,f(s_i)}'$ 
for $s_i \in S\cap V_{t}$ ($=S\cap V_{t+1}$) such that each $T'_{i,j}$ 
is obtained from $T_{i,j}$. We first consider $T'_{i,j}$ for $s_i \in (S\cap V_{t})\setminus R_D(v)$. 
For $s_i \in (S\cap V_t)\setminus R_D(v)$, 
from $V^i_{D[V_t]}=V^i_{D[V_{t+1}]}$ holds, 
$T_{i,j}$ is also a $(D[V_{t+1}],s_i)$-in-tree. Thus, we set $T'_{i,j}=T_{i,j}$. 
Next we consider $T'_{i,j}$ for $s_i \in R_D(v)$. 
For $s_i \in R_D(v)$, since $V^i_{D[V_{t+1}]}=V^i_{D[V_{t}]}\cup \{v\}$ holds,
we need to add an arc in $\delta_{D}(v)$ to $T_{i,j}$. 
Here we use a matching $\mathcal{M}$ in $G_v$ which saturates vertices in $X_v$. 
For each edge $x_ey_{i,j} \in \mathcal{M}$, we set $T'_{i,j}$ be an in-tree obtained by adding 
an arc $e$ to $T_{i,j}$. If there exists $y_{i',j'} \in Y_v$  
which is not contained in any edge in $\mathcal{M}$, 
we arbitrarily choose an arc $e'\in \delta_D(v)$ such that $x_{e'}$ is a neighbour of $y_{i',j'}$ in $G_v$
and we set $T'_{i',j'}$ to be 
an in-tree obtained by adding $e'$ to $T'_{i',j'}$. 
From the way of construction, $\mathcal{T}'$ is clearly a $D[V_{t+1}]$-canonical set of in-trees.
Since $M$ saturates vertices in $X_v$, $T'_{i,1},\ldots,T'_{i,f(s_i)}$ with $s_i \in R_D(v)$ contain all arcs in 
$\delta_D(v)$. Thus, since $\mathcal{T}$ covers the arc set of $D[V_t]$ from the induction hypothesis, 
$\mathcal{T}'$ covers the arc set of $D[V_{t+1}]$. \\
{\bf Case2$\colon$}
Next we consider the case of $v \in S$. In this case, since $(S\cap V_{t})\setminus (S\cap V_{t+1})=\{v\}$ holds, 
letting $v =s_i$, 
we need to add new in-trees $T'_{i,j}=(\{s_i\}, \emptyset)$ for every $j=1,\ldots,f(s_i)$ 
to $\mathcal{T}'$ which is constructed as above. 
This completes the proof of the ``if-part''.\\
{\bf Only if-part$\colon$}
Assume that there exists a $D$-canonical set $\mathcal{T}$ of in-trees covering
$A$. For $i=1,\ldots,d$, we denote $f(s_i)$ $(D,s_i)$-in-trees
of $\mathcal{T}$ by $T_{i,1},\ldots,T_{i,f(s_i)}$.
Let us fix $v \in V$, and for $X_v$ and $Y_v$ we define a set $E'$ in which
an edge $x_e y_{i,j}$ is contained in $E'$ if and only if $e \in
\delta_D(v)$ is contained in $T_{i,j}$. If $e\in
\delta_D(v)$ is contained in $T_{i,j}$, $s_i$
is reachable from $\partial^+(e)$. Thus, 
$E'$ is a subset of $E_v$. 
Since $\mathcal{T}$ covers $A$,
each $e \in \delta_D(v)$ is contained in at least one in-tree in
$\mathcal{T}$. That is, $E'$ saturates $X_v$. 
Since $T_{i,j}$ is an in-tree, 
each $y_{i,j}$ is contained in exactly one edge in $E'$. 
Thus, it is not difficult to see
that a matching in $G_v$ which saturates vertices in $X_v$ can be
obtained from $E'$. This completes the proof.  
\end{proof}
From Theorem~\ref{main_theorem_acyclic}, 
instead of the algorithm presented in Section~\ref{Algorithm}, 
we can more efficiently find a $D$-canonical set
of in-trees covering $A$ by finding a maximum matching in
a bipartite graph $O(|V|)$ times.  
In regard to algorithms for finding a maximum matching in a bipartite
graph, see e.g.~\cite{HK73}. 

\begin{corollary} 
Given an acyclic directed graph  $D=(V,A,S,f)$, 
we can find a $D$-canonical set of in-trees which covers $A$
in $O({\sf match}(M+|A|, M|A|))$ time if one exists 
where ${\sf match}(n,m)$ represents the time 
required to find maximum matching in a bipartite graph with $n$ vertices and 
$m$ arcs and $M=\sum_{v \in V}f(R_D(v))$. 
\end{corollary}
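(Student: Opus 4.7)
The plan is to implement the ``if-part'' of the proof of Theorem~\ref{main_theorem_acyclic} as an algorithm. That proof proceeds inductively: starting from a trivial $D[V_1]$-canonical set of in-trees covering the arcs of $D[V_1]$, and scanning $V$ in a topological order $v_1,\ldots,v_{|V|}$, at step $t+1$ it uses any matching of $G_{v_{t+1}}$ saturating $X_{v_{t+1}}$ to extend a $D[V_t]$-canonical set of in-trees covering the arcs of $D[V_t]$ into a $D[V_{t+1}]$-canonical set covering the arcs of $D[V_{t+1}]$, with trivial in-trees $(\{s_i\},\emptyset)$ inserted when $v_{t+1}=s_i\in S$. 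By Theorem~\ref{main_theorem_acyclic}, a feasible solution exists exactly when every $G_v$ admits such a matching, so the algorithm reduces to producing one such matching per vertex (or detecting failure).

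Rather than invoking a bipartite matching routine $|V|$ times, I would form the disjoint union $G=\bigsqcup_{v\in V}G_v$ and compute a single maximum matching of $G$. Since the $G_v$'s sit in pairwise disjoint vertex sets inside $G$, any maximum matching of $G$ restricts to a maximum matching of each $G_v$, and it saturates $\bigcup_{v\in V}X_v$ if and only if it saturates every $X_v$ individually. The vertex count of $G$ is $\sum_{v\in V}(|\delta_D(v)|+f(R_D(v)))=|A|+M$, and the edge count is bounded by $\sum_{v\in V}|\delta_D(v)|\cdot f(R_D(v))\le M|A|$, so this single call costs $O({\sf match}(M+|A|,M|A|))$ time.

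With the matching in hand, I would compute a topological order of $V$ in $O(|V|+|A|)$ time and then execute the inductive construction: at each $v_{t+1}$, read off the portion of the matching contained in $G_{v_{t+1}}$, append the matched arc of $\delta_D(v_{t+1})$ to the appropriate $T_{i,j}$ (using the arbitrary choice from the proof for any $y_{i,j}\in Y_{v_{t+1}}$ not saturated by the matching), and introduce $f(s_i)$ trivial in-trees if $v_{t+1}=s_i\in S$. Preprocessing (computing $R_D(v)$ for all $v$ via $|S|$ reverse searches from the $s_i$'s and then assembling $G$) together with this construction phase fit into $O(M|A|)$ time, which is absorbed by the matching cost; so the overall bound is the claimed $O({\sf match}(M+|A|,M|A|))$.

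The only nontrivial step is the single matching call, and the main thing to verify is that merging the per-$v$ matching problems into one global problem preserves correctness---which is immediate from the disjointness of the $G_v$'s as components of $G$---and that the vertex and edge counts of $G$ are bounded by $M+|A|$ and $M|A|$ respectively, as computed above.
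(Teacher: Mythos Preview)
Your proposal is correct and follows essentially the same approach as the paper: both rely on the bipartite graphs $G_v$ from the proof of Theorem~\ref{main_theorem_acyclic}, observe that $\sum_{v}(|X_v|+|Y_v|)=M+|A|$ and $\sum_v|E_v|\le M|A|$, and conclude the stated running time. Your write-up is in fact more explicit than the paper's terse proof in one respect---you justify the single $O({\sf match}(M+|A|,M|A|))$ bound by forming the disjoint union $\bigsqcup_v G_v$ and running one maximum matching call, whereas the paper simply records the sums and asserts that the corollary follows.
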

\begin{proof}
From the proof of Theorem~\ref{main_theorem_acyclic}, for each $v \in
V$, $|X_v|=|\delta_D(v)|$ and $|Y_v|=f(R_D(v))$ hold. Then,
 $|E_v|=O(|\delta_D(v)|\cdot f(R_D(v)))$ follows. 
Thus, the corollary follows from 
$\sum_{v \in V}(|X_v|+|Y_v|)=M+|A|$ and $\sum_{v \in V}|E_v|=M|A|$. 
\end{proof}
{\bf Acknowledgement$\colon$}
We thank Prof.~Tibor Jord\'{a}n who informed us of the paper \cite{F06} and 
we are grateful to Shin-ichi Tanigawa for helpful comments.

\end{document}